\documentclass[times,sort&compress,3p]{elsarticle}

\usepackage{hyperref}










\usepackage{algorithmic}
\usepackage{mathrsfs}
\usepackage{graphicx, setspace, amsmath, amssymb, subfigure, url, multirow, booktabs, verbatim, bm,threeparttable, amsthm, color}
\usepackage[linesnumbered,boxed,ruled,commentsnumbered]{algorithm2e}

\newtheorem{theorem}{Theorem}

\newtheorem{condition}{Condition}
\newtheorem{lemma}{Lemma}

\bibliographystyle{elsarticle-num}

\begin{document}

\begin{frontmatter}

\title{Hierarchical False Discovery Rate Control for High-dimensional Survival Analysis with Interactions}


\author[Yale]{Weijuan Liang}
\ead{weijuan.liang@yale.edu}

\author[XMU]{Qingzhao Zhang\corref{mycorrespondingauthor}}
\cortext[mycorrespondingauthor]{Corresponding author}
\ead{qzzhang@xmu.edu.cn}

\author[Yale]{Shuangge Ma\corref{mycorrespondingauthor}}
\ead{shuangge.ma@yale.edu}

\address[Yale]{Department of Biostatistics, Yale School of Public Health, New Haven, Connecticut, USA}
\address[XMU]{Department of Statistics and Data Science, School of Economics, The Wang Yanan Institute for Studies in Economics, and Fujian Key Lab of Statistics, Xiamen University, Xiamen, China}





\begin{abstract}
With the development of data collection techniques, analysis with a survival response and high-dimensional covariates has become routine. Here we consider an interaction model, which includes a set of low-dimensional covariates, a set of high-dimensional covariates, and their interactions. This model has been motivated by gene-environment (G-E) interaction analysis, where the E variables have a low dimension, and the G variables have a high dimension. For such a model, there has been extensive research on estimation and variable selection. Comparatively, inference studies with a valid false discovery rate (FDR) control have been very limited. The existing high-dimensional inference tools cannot be directly applied to interaction models, as interactions and main effects are not ``equal". In this article, for high-dimensional survival analysis with interactions, we model survival using the Accelerated Failure Time (AFT) model and adopt a ``weighted least squares + debiased Lasso'' approach for estimation and selection. A hierarchical FDR control approach is developed for inference and respect of the ``main effects, interactions'' hierarchy. { The asymptotic distribution properties of the debiased Lasso estimators} are rigorously established. Simulation demonstrates the satisfactory performance of the proposed approach, and the analysis of a breast cancer dataset further establishes its practical utility. 

\end{abstract}

\begin{keyword}
High-dimensional survival; Interaction analysis; FDR control; Hierarchy; Debiased Lasso.
\end{keyword}

\end{frontmatter}


\section{Introduction}
With the development of data collection techniques, analysis with a censored survival outcome and high-dimensional covariates has become routine. As a representative example, in cancer research, modeling the relationship between overall, disease-specific, and progression-free survival and genetic variables is of critical significance and has been extensively examined \citep{he2022rank, chai2019inference, vansteelandt2022assumption}. Here we further focus on an interaction model that includes a set of low-dimensional covariates, a set of high-dimensional covariates, and their interactions. A strong motivation for this model is gene-environment (G-E) interaction analysis \citep{wu2020structured,ren2023robust}, under which the E variables have a low dimension, and the G variables have a high dimension. It is noted that there are many other scientific domains and interaction models that have similar properties \citep{haris2016convex, du2021lasso,wang2022two}. For high-dimensional survival analysis with interactions, in recent years, there have been extensive works on estimation and variable selection \citep{wu2018dissecting, wu2020structured}. Among the available techniques, penalization has been especially popular \citep{wu2018dissecting}. In published interaction analysis studies especially the recent ones, respecting the ``main effects, interactions'' hierarchy has been strongly advocated \citep{wu2018dissecting, xu2022multidimensional}. Under this hierarchy, if an interaction term is selected as important, then at least one of its corresponding main effects (under the weak hierarchy) or both main effects (under the strong hierarchy) need to be selected. Special estimation and selection procedures need to be developed to satisfy this hierarchy \citep{bien2013lasso, wu2020structured}. For the specific interaction model considered in this study, the hierarchy reduces to that, if an interaction term is selected as important, then the corresponding main effect from the high-dimensional covariates needs to be selected. This has been well established in the G-E interaction analysis \citep{wu2020structured} and other studies. 

Compared to estimation and selection, inference, which is needed to draw more definitive conclusions, can be more challenging, and research remains limited. For high-dimensional analysis without interactions, a few methods have been developed, including the debiased Lasso method \citep{javanmard2014confidence}, data splitting method \citep{dai2022false}, model-X knockoff filter method \citep{candes2018panning}, and others. Our literature review suggests that there have been many more works for continuous and categorical outcomes under generalized linear models than for censored survival outcomes. Directly applying these methods to interaction models may lead to a violation of the ``main effects, interactions'' hierarchy. There has been some effort on hierarchical testing, under which hypotheses in the lower hierarchy are tested only when their counterparts in the higher hierarchy are rejected. Here it is noted that the concept of hierarchical testing goes beyond interaction models. Examples of hierarchical testing include \cite{yekutieli2008hierarchical}, which develops a general framework for testing hierarchically ordered hypotheses with a specific false discovery rate (FDR) control under the independence assumption. Lynch et al. \cite{lynch2017control} develops a hierarchical testing procedure for a fixed sequence structure where the testing order of hypotheses is known a priori, and both arbitrary and negative dependencies are considered. Bogomolov et al. \cite{bogomolov2021hypotheses} introduces a multiple hierarchical testing procedure using p-values that controls global error rates in a tree structure. G'Sell et al. \cite{g2016sequential} develops two sequential selection procedures, namely ForwardStop and StrongStop, and proves that these procedures control FDR at a pre-specific level while maintaining the ordering of testings. These hierarchical or sequential FDR control procedures require valid p-values, which are highly nontrivial in high-dimensional interaction analysis, especially with a censored survival outcome.

In this article, to fill an important knowledge gap, we consider inference for high-dimensional survival analysis with interactions. For modeling survival, we adopt the Accelerated Failure Time (AFT) model, which can be preferred over the Cox model under high-dimensional settings with its computational simplicity and simple interpretations \citep{wei1992accelerated, chai2019inference}. For regularized estimation, we adopt the debiased Lasso technique, which has served as the basis for multiple inference studies \citep{javanmard2014confidence, van2014asymptotically, javanmard2019false}, although we note that it has not been applied to the present data/model setting. We adopt a two-step inference procedure that ensures the ``main effects, interactions'' hierarchy. Specifically, in the first step, the high-dimensional main effects are tested. In the second step, interactions are tested only when their corresponding main effect hypotheses are rejected. An explicit computational procedure is developed, and we rigorously prove that it can control FDR at a pre-specified level. This study nontrivially advances from \cite{wu2020structured, wu2018dissecting} by conducting inference, from \cite{javanmard2014confidence, javanmard2019false} by analyzing survival data with high-dimensional interactions, and from \cite{yekutieli2008hierarchical, lynch2017control} by not assuming pre-existing valid p-values. Beyond the methodological and theoretical advancements, the simulation and data analysis also show that this study can provide a practically useful tool. 

The rest of the article is organized as follows. In Section \ref{sec:2}, we describe the data/model setup and the debiased Lasso estimator and develop the hierarchical FDR control procedure. In Section \ref{sec:3}, we establish asymptotic normality of the debiased Lasso estimator and provide the theoretical guarantee for the proposed approach. Extensive simulations and comparisons with six alternatives are conducted in Section \ref{sec:4} to gauge practical performance. In Section \ref{sec:5}, we analyze a breast cancer dataset to demonstrate practical utility. The proofs and additional numerical results are provided in Appendix.

\section{Methodology}
\label{sec:2}
\subsection{Notations}
Denote $\text{1}(\cdot)$ as the indicator function and $[d] = \{1,\cdots, d\}$ as the set of the first $d$ integers. For $m$-dimensional vector $\bm{a}$ with the $j$-th entry $a_j$, its ${l}_p$ and $l_\infty$ norms are $\|\bm{a}\|_p = (\sum_{j=1}^m |a_j|^p)^{1/p}$ and $\|\bm{a}\|_\infty = \max_{j\in[m]} |a_j|$, respectively.
Let $\bm{a}_\mathcal{S}$ denote the sub-vector of $\bm{a}$ with the entry indices restricted to set $\mathcal{S}$. Let $\mathbf{I}$ generically denote an identity matrix, and its dimension will be clarified when needed. For matrix $\mathbf{B}$, let $\bm{b}_{i}$ be the transpose of its $i$-th row and $\bm{b}_{,j}$ be its $j$-th column. Its $l_1$ and $l_\infty$ norms are $\|\mathbf{B} \|_{1} =\sup_j{\sum_i |b_{ij}|}$ and $\|\mathbf{B}\|_\infty = \sup_{i} \sum_{j}|b_{ij}|$, respectively. 
The maximum and minimum eigenvalues of matrix $\mathbf{B}$ are denoted by $\sigma_{\max}(\mathbf{B})$ and $\sigma_{\min}(\mathbf{B})$, respectively. For set $\mathcal{A}$, denote its cardinality and complementary set as $|\mathcal{A}|$ and $\mathcal{A}^c$, respectively. A random variable $\varepsilon$ is sub-Gaussian, if there exist some positive constants $k_1, k_2$ such that the tail probability of $\varepsilon$ satisfies $P(|\varepsilon|>t) \leq k_2 \exp(-k_1 t^2)$ for all $t\geq 0$. $\mathcal{N}(\mu, \sigma^2)$ denotes a normal distribution with mean $\mu$ and variance $\sigma^2$.

\subsection{Modeling and hierarchical FDR control}

Denote $T$ and $C$ as the logarithm of event and censoring times, respectively. Assume that $T$ and $C$ are independent. Let $X=(X_1, \cdots, X_d)^\top$ be the $d$-dimensional covariates with $n\ll d$, and $Z=(Z_1, \cdots, Z_q)^\top$ be the $q$-dimensional covariates with a fixed $q$. Without loss of generality, assume that the response and covariates are properly centered such that the intercept term can be omitted. Under the AFT model,
\begin{equation}
\label{eq:2.1}
T=  \sum_{j=1}^d X_j \alpha_{0,j} + \sum_{k=1}^q Z_k \gamma_{0,k} + \sum_{j=1}^d \sum_{k=1}^q X_j Z_k \beta_{0,jk}+\varepsilon,
\end{equation}
where $\{\alpha_{0,j}\}_{1}^d$, $\{\gamma_{0,k}\}_{1}^q$, and $\{\beta_{0,jk}\}_1^{d,q}$ are the true regression coefficients. $\varepsilon$ is the random error and satisfies $E(\varepsilon|X,Z) =0$. The distribution of $\varepsilon$ is otherwise unspecified, making the modeling flexible. 
Denote the censoring indicator as $\delta=\text{1} (T \leq C)$.
Consider the scenario where $\{\alpha_{0,j}\}_{1}^d$ and $\{\beta_{0,jk}\}_1^{d,q}$ are sparse while $\{\gamma_{0,k}\}_{1}^q$ is not.


We rewrite model (\ref{eq:2.1}). Denote the augmented covariate  vector as $\Phi = (X^\top, Z^\top, (X\otimes Z)^\top)^\top$ and its dimension as $p=d+(d+1)q $. Its associated $p$-dimensional coefficient vector is denoted as $\bm{\theta}_0 = (\theta_{0,1}, \cdots, \theta_{0,p})^\top$ with
\begin{align*} 
\begin{split} 
\theta_{0,l}= \left \{ 
\begin{array}{ll} 
    \alpha_{0,j},                    &\text{for}~ l=j~\text{and}~j\in[d],\\ 
    \gamma_{0,k},     & \text{for}~l = d+k ~\text{and}~ k\in[q],  \\ 
    \beta_{0,jk},                                 &  \text{for}~ l= d+ jq+k, j \in[d] ~\text{and}~ k \in[q],
\end{array} 
\right. 
\end{split} 
\end{align*}
where $\otimes$ denotes the Kronecker product. Then model (\ref{eq:2.1}) can be written as:
\begin{equation}
\label{eq:2.2}
T= \Phi^\top \bm{\theta}_0 + \varepsilon.
\end{equation}
Denote the truly important covariate index set as $\mathcal{S}=\{j \in [p] : \theta_{0,j} \neq 0\}$. The set of the rejected main effects and the set of the rejected interactions associated with the $j$-th high-dimensional covariate, respectively, have index sets: 
\begin{align*} 
&\mathcal{A}_1=\left\{j \in [d]: \text{the null hypothesis}~ \mathcal{H}_j^0: \theta_{0,j} = 0~\text{is rejected}\right\},~ \text{and}\\
&\mathcal{A}_{2j} =\{k \in \mathcal{B}_{2j}:  \text{the null hypothesis}~ \mathcal{H}_k^0: \theta_{0,k} = 0~ \text{is rejected}\},
\end{align*}
where $\mathcal{B}_{2j} = \{d+jq+1,\cdots, d+jq+q\}$ is the index set of all the interactions corresponding to the $j$-th high-dimensional main effect. 
Denote the selected important covariate index set as $\widehat{\mathcal{S}}$. A false discovery occurs if $j\in \mathcal{S}^c \cap \widehat{\mathcal{S}}$.

In our analysis, the goal is to identify the rejected hypothesis index set $\widehat{\mathcal{S}} = \mathcal{A}_1\cup (\bigcup_{j\in \mathcal{A}_1} \mathcal{A}_{2j})$ that can control the overall FDR at a pre-specified level $\alpha$. Here we note that, in a few studies, FDR has been controlled for main effects and interactions separately, while our analysis is consistent with those that controls the overall FDR. Under the ``main effects, interactions'' hierarchy, a hypothesis on an interaction effect is tested only when its associated hypothesis on the corresponding high-dimensional main effect is rejected. 
Define FDR and false discovery proportion (FDP) for the main effects, $\text{FDR}_0$ and $\text{FDP}_0$, respectively, as:
$$
\text{FDR}_0 = E\left\{ 1(R>0)\cdot \frac{|\mathcal{A}_1\cap \mathcal{S}^c|}{R} \right\} = E[\text{FDP}_0], ~~ \text{FDP}_0=\frac{|\mathcal{A}_1\cap \mathcal{S}^c|}{ \max (|\widehat{\mathcal{S}}|, 1)},
$$
where $R = |\widehat{\mathcal{S}}|$ is the total number of discoveries, and $|\mathcal{A}_1\cap \mathcal{S}^c| = |\{j \in {\mathcal{A}_1}: \theta_{0,j} = 0 \}|$ is the total number of main effect false discoveries.
For $j \in [d]$, define FDR and FDP for the set of interactions corresponding to the $j$-th high-dimensional main effect, FDR$_j$ and FDP$_j$, respectively, as: 
    \begin{align*}
        \text{FDR}_j = E\left\{ 1( j \in \mathcal{A}_1, R>0) \cdot  \frac{|\mathcal{A}_{2j} \cap \mathcal{S}^c|}{R} \right\} = E[\text{FDP}_j], \quad
        \text{FDP}_j = \frac{|\mathcal{A}_{2j} \cap \mathcal{S}^c|}{ \max (|\widehat{\mathcal{S}}|, 1)}, 
    \end{align*}
where $|\mathcal{A}_{2j} \cap \mathcal{S}^c| = | \{k \in {\mathcal{A}_{2j}}: \theta_{0,k} = 0, j \in\mathcal{A}_1 \}|$ is the the number of false discoveries for all the interactions associated with the $j$-th high dimensional main effect. Note that if $|\widehat{\mathcal{S}}|=0$, we define FDP$_j$=0 for all $j=0,\cdots,d$.
The overall FDR is defined as
$$
\text{FDR} = \sum_{j=0}^d \text{FDR}_j.
$$

\subsection{Debiased Lasso weighted least squared estimation}

Suppose that there are $n$ iid subjects. We observe $\{(y_i, \delta_i, \bm{x}_i, \bm{z}_i)\}_1^n$, where $y_i=\text{min}(T_i, C_i)$, and $\delta_i$,  $\bm{x}_i  \in \mathbb{R}^{d\times 1}$ and $\bm{z}_i \in \mathbb{R}^{q \times 1}$ are the $i$-th realization of $\delta$, $X$ and $Z$, respectively. Denote $\mathbf{\Phi} \in \mathbb{R}^{n\times p}$ as the augmented design matrix with the $i$-th row's transpose being $\bm{\phi}_i = (\bm{x}_i^\top, \bm{z}_i^\top, (\bm{x}_i \otimes \bm{z}_i)^\top)^\top$. 
For estimation, we consider the computationally highly advantageous weighted least squared approach \citep{stute1993consistent}. Assume that the observations are sorted according to $y_i$'s. The Kaplan-Meier weights are:
$$w_1=\frac{\delta_1}{n}, ~ w_i = \frac{\delta_i}{n-i+1}\prod \limits_{j = 1}^{i-1}\left( \frac{n-j}{n-j+1}\right)^{\delta_j}, ~ i = 2, \cdots,n,$$
and the rescaled weight matrix is  $\mathbf{W}= \text{diag} \{nw_1,\cdots, nw_n\} \in \mathbb{R}^{n\times n}$. The Lasso penalized estimator $\widehat{\bm{\theta}}$ is defined as: 
\begin{equation}
\label{eq:2.3}
\widehat{\bm{\theta}}= \mathop{\arg\min} \limits_{\bm{\theta} \in \mathbb{R}^p} \left\{ \frac{1}{2n} \| \mathbf{W}^{1/2} (\mathbf{y}-\bm{\Phi} \bm{\theta}) \|_2^2 + \lambda \| \bm{\theta}\|_1\right\},
\end{equation}
where $\mathbf{y} = (y_1, \cdots, y_n)^\top$ and $\lambda >0$ is a data-dependent tuning parameter.
Under low-dimensional settings and without penalization, Stute \cite{stute1996distributional} proves consistency and asymptotic normality of the estimator as $n \rightarrow \infty$ for a fixed $p$ under mild conditions. With the Lasso penalty, the estimator is expected to be biased. To remove bias and generate a better-behaved estimator, we consider the debiased Lasso approach with estimator: 
\begin{equation}
\label{eq:2.4}
\widehat{\bm{\theta}}^d = \widehat{\bm{\theta}} +\frac{1}{n}\widehat{\mathbf{M}}\bm{\Phi}^{\top}  \mathbf{W} (\mathbf{y}- \bm{\Phi}\widehat{\bm{\theta}}).
\end{equation}
Here, $\widehat{\mathbf{M}} \in \mathbb{R}^{p\times p}$ is the ``decorrelating" matrix \citep{javanmard2014confidence, van2014asymptotically, javanmard2019false}. 
Following  \cite{javanmard2014confidence, javanmard2019false}, we construct $\widehat{\mathbf{M}}$  by solving the transpose of its $i$-th row $\widehat{\bm m}_i$ from the following convex problem:
\begin{equation}
\begin{aligned}
\label{eq:2.5}
\min \quad \quad \quad \quad & \bm{m}^\top \widehat{\bm{\Gamma}}\bm{m}, \\
 \text{subject to } \quad &\| \widehat{\bm{\Gamma}} \bm{m} - \bm{e}_i \|_{\infty} \leq \mu,\\
&\|\mathbf{W}^{1/2} \mathbf{\Phi} \bm{m}\|_\infty \leq n^{c_0}, 
\end{aligned}
\end{equation}
for an arbitrary fixed $1/4 < c_0 <1/2$,
where $\widehat{\mathbf{\Gamma}} = \mathbf{\Phi}^\top \mathbf{W} \mathbf{\Phi}/n$ is the empirical weighted covariance matrix, $\bm{e}_i$ is a $p$-dimensional vector with the $i$-th entry being 1 and the rest being 0, and $\mu$ controls the entry-wise $l_\infty$ norm of $\widehat{\mathbf{M}}\widehat{\mathbf{\Gamma}}-\mathbf{I}$ and the bias of $\widehat{\bm{\theta}}^d$ as shown in Theorem \ref{theorem1}. As the construction of $\widehat{\mathbf{M}}$ demonstrates, debiasing uses a reasonable ``relaxed" alternative to approximate the inverse of $\hat{\mathbf{\Gamma}}$.

To set the stage to derive asymptotic normality of  $\sqrt{n}(\widehat{\bm{\theta}}^d -\bm{\theta}_0)$, we plug (\ref{eq:2.2}) into (\ref{eq:2.4}) and obtain:
\begin{equation}
\label{eq:2.6}
\sqrt{n}(\widehat{\bm{\theta}}^d -\bm{\theta}_0) = \frac{1}{\sqrt{n}} \widehat{\mathbf{M}}\bm{\Phi}^{\top}  \mathbf{W} \bm{\varepsilon}-\sqrt{n}(\widehat{\mathbf{M}} \widehat{\bm{\Gamma}} - \mathbf{I})(\widehat{\bm{\theta}}-\bm{\theta}_{0}),
\end{equation}
where $\mathbf{I} \in \mathbb{R}^{p\times p}$ is an identity matrix. Denote $\bm{v}=\widehat{\mathbf{M}}\bm{\Phi}^{\top}  \mathbf{W} \bm{\varepsilon} /\sqrt{n}$ and $\bm{\Delta} = \sqrt{n}(\widehat{\mathbf{M}} \widehat{\bm{\Gamma}} - \mathbf{I})(\widehat{\bm{\theta}}-\bm{\theta}_{0})$. In the next section, we prove that $\bm{v}$ is entry-wise asymptotically normal in Theorem \ref{theorem1} (a) and that $\bm{\Delta}$ is asymptotically negligible in Theorem \ref{theorem1} (b), under some commonly assumed conditions. The asymptotic covariance of $\bm{v}$ can be estimated based on the observed data. Specifically, define:
$$
\begin{aligned}
&\widehat{\varphi}_{j}\left(\bm\phi_{i}, y_{i}\right)=\phi_{ij}\left(y_{i}- \bm\phi_{i}^{\top} \widehat{\bm\theta} \right),\\
&\widehat{\tau}_{0}(y)=\exp \left[\sum_{{k: y_{k}<y,\delta_{k}=0}} \frac{1}{n-\sum_{l=1}^{n} 1(y_l \leq y_{k})}\right],\\
&\widehat{\tau}_{1}^{j}\left(y\right)=\sum_{{k: y_{k}>y, \delta_k=1}} \left[ \frac{\widehat{\varphi}_{j}\left(\bm\phi_{k}, y_{k}\right) \widehat{\tau}_{0}\left(y_{k}\right)}{n-\sum_{l=1}^{n} 1(y_{l} \leq y)} \right],\\
&\widehat{\tau}_{2}^{j}\left(y\right)=\sum_{k: y_{k}<y, \delta_k = 0}\left\{ \frac{\sum_{l: y_{l}>y_{k}, \delta_{l}=1} \widehat{\varphi}_{j}\left(\bm\phi_{l}, y_{l}\right) \widehat{\tau}_{0}\left(y_{l}\right)}{\left[n-\sum_{m=1}^{n} 1(y_{m} \leq y_{k}) \right]^{2}}\right\}.
\end{aligned}
$$
For the $i$-th sample, let
$\widehat{\zeta_{j}}\left(\bm\phi_{i}, y_{i}\right)=\widehat{\varphi}_{j}\left(\bm\phi_{i}, y_{i}\right) \widehat{\tau}_{0}\left(y_{i}\right) \delta_i+\widehat{\tau}_{1}^{j}\left(y_{i}\right)\left(1-\delta_{i}\right)-\widehat{\tau}_{2}^{j}\left(y_{i}\right)$, $\widehat{\sigma}_{jk}$ be the sample covariance of $\widehat{\zeta}_{j}$ and $\widehat{\zeta}_{k}$,
and $\widehat{\bm\Sigma} = (\widehat{\sigma}_{jk})_{p\times p}$.

\subsection{Hierarchical FDR control}
Define $\widehat{\bm\Lambda}=\widehat{\mathbf{M}} \widehat{\bm{\Sigma}}\widehat{\mathbf{M}}^\top$ and its population counterpart as $\bm{\Lambda}$.
Theorem \ref{theorem1} (a) below establishes that 
$$\frac{v_j}{\sqrt{\Lambda_{jj}}} \stackrel{d}{\rightarrow} \mathcal{N}(0,1),$$
where $\Lambda_{jj}$ is the $j$-th diagonal entry of $\bm{\Lambda}$. Since $\bm{\Lambda}$ is unknown in practice, we estimate it by $\widehat{\bm\Lambda}$ and use the estimate to construct the test statistics.
For $j\in[p]$, define the test statistic as
$$U_j = \frac{\sqrt{n} \widehat{\theta}_j^d}{\sqrt{\widehat{\Lambda}_{jj}}},$$
where $\widehat{\Lambda}_{jj}$ is the $j$-th diagonal entry of $\widehat{\bm{\Lambda}}$.  Under the null hypothesis $\mathcal{H}_j^0:\theta_{0,j}=0$, Theorem \ref{theorem2} establishes that 
$U_j ={v_j}/{{\widehat{\Lambda}_{jj}}}^{1/2} + {\Delta_j}/{{\widehat{\Lambda}_{jj}}}^{1/2}\stackrel{d}{\rightarrow} \mathcal{N}(0,1),$
by the definition of $\widehat{\theta}_j^d$ under some mild conditions.

We propose first conducting hypothesis testing on all the high-dimensional main effects and then testing the interactions only if their corresponding high-dimensional main effects have been rejected. More specifically, with a proper threshold $t\geq0$, we reject the null hypotheses on the main effects if $|U_j|\geq t$ for $j\in[d]$ and obtain the rejected hypothesis index set $\mathcal{A}_1$. Then we test the null hypotheses on the interactions $\mathcal{H}_k^0: \theta_{0,k}= 0, k\in\mathcal{B}_{2j}$  if $j\in\mathcal{A}_1$ and reject
$\mathcal{H}_k^0$ if both $|U_k| \geq t$ and $|U_j| \geq t$. Denote the total number of rejections at threshold $t$ as 
$$R(t) = \sum_{j\in[d]} 1(|U_j|\geq t) + \sum_{j\in[d]} \sum_{k \in \mathcal{B}_{2j}} 1(|U_j|\geq t, |U_k|\geq t).$$
Then the false discovery proportions (FDPs) at threshold $t>0$ are given by:
$$\text{FDP}_0(t) = \frac{\sum_{j\in \mathcal{S}^c \cap [d]} 1(|U_j| \geq t)}{R(t)\vee 1},$$
$$\text{FDP}_j(t) = \frac{\sum_{k\in \mathcal{S}^c \cap \mathcal{B}_{2j}} 1(|U_j| \geq t, |U_k|\geq t)}{R(t)\vee 1},~ \text{for} ~ j \in[d].$$

With a pre-specified target $\alpha$, the threshold $t_0$ is determined by:
$$t_0 = \inf \left\{t \geq 0: \sum_{j=0}^d\text{FDP}_j(t) \leq \alpha \right\}.$$
With unknown $\mathcal{S}^c$, $\text{FDP}_j$'s are not directly available.  Following a common strategy in FDR control studies, we substitute $\sum_{j\in \mathcal{S}^c \cap [d]} 1(|U_j| \geq t)$ by $dG(t)$ and $\sum_{k\in \mathcal{S}^c \cap \mathcal{B}_{2j}} 1(|U_j| \geq t, |U_k|\geq t)$ by $qG^2(t)$ for conservative upper bounds, where $G(t) = 2(1-\Phi(t))$ and $\Phi(\cdot)$ is the cumulative density function of the standard normal distribution. 
Note that we assume $|\mathcal{S}^c| \rightarrow \infty$ as $p \rightarrow \infty$; otherwise, the FDR control problem reduces to a trivial strategy as we can just select all the covariates.
The detailed testing procedure is summarized in Algorithm \ref{Alg:1}. 
{\color{blue} The corresponding R code is available at \url{https://github.com/weijuanliang12138/Hierarchical-FDR-R-Code}.} 
In the next section, we establish the theoretical validity of the proposed procedure.

\begin{algorithm}
\caption{Hierarchical False Discovery Rate Control}
\label{Alg:1}
\begin{enumerate}
\item[Step 1:] Consider a pre-specified target level $\alpha \in[0,1]$. Let $t_p = (2\log p -2\log \log p)^{1/2}$, and calculate
\begin{equation}
\label{eq:algorithm1}
t_0 = \inf \left\{ 0  \leq t \leq t_p: \frac{dG(t)(1+qG(t)) }{R(t) \vee 1} \leq \alpha \right\}.
\end{equation}
If (\ref{eq:algorithm1}) does not exist, then set $t_0=\sqrt{2\log p}$.
\item[Step 2:] For the main effects, reject $\mathcal{H}_j^0$ if $|U_j| \geq t_0$ for $j\in[d]$, and obtain the index set of the rejected null hypothesis $\mathcal{A}_1$.
\item[Step 3:] For the interactions, test $\mathcal{H}_k^0$ if $k \in \mathcal{B}_{2j}, j\in \mathcal{A}_1$, and reject $\mathcal{H}_k^0$ if both $|U_j|\geq t_0$ and $|U_k| \geq t_0$.
\end{enumerate}
\end{algorithm}

\section{Asymptotic properties}
\label{sec:3}

We first establish the asymptotic normality of the debiased Lasso estimator $\widehat{\bm{\theta}}$ and then show that the proposed procedure achieves an asymptotic FDR control hierarchically under high-dimensional settings.

Let $H$ and $G$ be the distribution functions of $Y$ and $C$, respectively. Denote the end points of the support of $Y$, $T$, and $C$ as $\tau_Y$, $\tau_T$, and $\tau_C$, respectively. Denote the joint distribution of $(\Phi, T)$ as $F$. Following \cite{stute1996distributional}, we write:
$$
\widetilde{F}(\bm\phi, t)= \begin{cases}F(\bm\phi, t), & t<\tau_{Y}, \\ F(\bm\phi, {\tau_{Y^-}})+1({\tau_{Y}} \in \mathcal{D}) F(\bm\phi,\tau_{Y}), & t \geq \tau_{Y}, \end{cases}
$$
where $\mathcal{D}$ is the set of atoms of $H$. Define sub-distribution functions:
\begin{equation*}
\begin{aligned}
\widetilde{H}^{11}(\bm\phi, y)&=P(\Phi \leq \bm\phi, Y \leq y, \delta=1),\\
\widetilde{H}^{0}(y)&=P(Y \leq y, \delta=0).
\end{aligned}
\end{equation*}
For $j\in [p]$, let $\varphi_{j}(\Phi, Y)=\phi_{j}(Y-\Phi^\top \bm\theta_{0})$,
\begin{equation*}
\begin{aligned}
&\tau_{0}(y)=\exp \left(\int_{0}^{y-} \frac{\widetilde{H}^{0}(d s)}{1-H(s)}\right), \\
&\tau_{1}^{j}(y)=\frac{1}{1-H(y)} \int 1{(s>y)} \varphi_{j}(\bm\phi, s) \tau_{0}(s) \widetilde{H}^{11}(d \bm\phi, d s), \\
&\tau_{2}^{j}(y)=\iint \frac{1{(v < y, v<s)} \varphi_{j}(\bm\phi, s) \tau_{0}(s)}{(1-H(s))^{2}} \widetilde{H}^{0}(d v) \widetilde{H}^{11}(d \bm\phi, d s),
\end{aligned}
\end{equation*}
and $\zeta_{j}=\varphi_{j}(\Phi, Y) \tau_{0}(Y) \delta+\tau_{1}^{j}(Y)(1-\delta)-\tau_{2}^{j}(Y)$. Let $\sigma_{jk}=\text{cov}(\zeta_{j}, \zeta_{k})$ be the covariance of $\zeta_{j}$ and $\zeta_{k}$, and $\mathbf{\Sigma} = (\sigma_{jk})_{p\times p}$. The following conditions are assumed.

\begin{condition}
\label{cond:1}
 $P(T\leq C|T, X,Z) = P(T\leq C |T)$. 
\end{condition}

\begin{condition}
\label{cond:2}
The error term $\varepsilon$ has a sub-Gaussian distribution. There exist some constants $c_1, c_2>0$ such that $|X_j|\leq c_1$ and $|Z_k|\leq c_2$ for all $j\in[d]$ and $k\in[q]$.
\end{condition}

\begin{condition}
\label{cond:3}
Let $\bm{\Gamma}=\text{E}(\Phi \Phi^{\top})$ and assume that $\bm\Gamma$ satisfies the restricted eigenvalue condition:
$$
\kappa^{2}(\bm{\Gamma},\mathcal{S})=\inf_{\substack{\|\bm{a}_{\mathcal{S}^c} \|_{1} \leq 3\|\bm{a}_{\mathcal{S}}\|_{1} \\ \|\bm{a}\|_2 \neq 0}} \frac{\bm a^{\top} \bm\Gamma \bm a}{\|\bm{a}_\mathcal{S}\|_{2}^{2}} \geq c_3>0 .
$$
Furthermore, the eigenvalues of $\mathbf{\Sigma}$ are positive and bounded away from 0 and infinity.
\end{condition}

\begin{condition}
\label{cond:4}
$\int |{\varphi}_{j}(\bm{\phi}, s)| C^{{1}/{2}}(s) \widetilde{F}(d \bm\phi, d s)<\infty$ and $E[\varphi_{j}(\Phi, Y) \tau_{0}(Y) \delta]^{2}<\infty$ for every $j\in[p]$, where $C(s)=\int_{0}^{s- } {G(d y)}/{\{[1-H(y)][1-G(y)]\}}$.
\end{condition}

Condition \ref{cond:1} assumes that $\delta$ is conditionally independent of the covariates $\Phi$ given the failure time $Y$. Combining with the independence of $Y$ and $C$, the distribution of the observed $(\Phi, Y, \delta)$ is uniquely determined.  We refer to \cite{stute1996distributional} and followup studies for detailed discussions.
Condition \ref{cond:2} has been commonly assumed in high-dimensional studies and ensures that the debiased estimator has a desirable convergence rate. Similar conditions can be found in \cite{chai2019inference, li2022high}. 
Condition \ref{cond:3} is the well-known restricted eigenvalue condition for $\bm{\Gamma}$, which is usually assumed in high-dimensional model selection studies. Condition \ref{cond:4} is needed to establish the asymptotic normality of the proposed estimator, following a similar strategy as in \cite{stute1996distributional,chai2019inference}.

Recall that $\bm{\Lambda} = \mathbf{M} \mathbf{\Sigma} \mathbf{M}^{\top}$, where $\mathbf{M} = \bm\Gamma^{-1}$. With the construction of $\widehat{\mathbf{M}}$ in (\ref{eq:2.5}), in what follows, Theorem \ref{theorem1} (a) establishes the entry-wise asymptotic normality of $\bm{v}$, and Theorem \ref{theorem1} (b) establishes that the ``noise" term $\bm{\Delta}$ is asymptotic negligible compared to $\bm{v}$. The proofs are provided in Appendix.


\begin{theorem}
\label{theorem1}
Assume that Conditions \ref{cond:1}-\ref{cond:4} hold. If $|\mathcal{S}^c| \geq cp$ for a constant $0<c\leq 1$, $|\mathcal S| \sqrt{\log p / n}=o(1)$, $\lambda=O(\sqrt{\log p / n})$, and $\mu=O(\sqrt{\log {p} / n})$, then: 
 \begin{itemize}
 \item[(a)] $v_j \stackrel{d}{\rightarrow}  \mathcal{N}(0, \Lambda_{jj})$, where $ v_j=  \sqrt{n} \sum_{i=1}^n w_i \widehat{\bm{m}}_j^\top \bm{\phi}_i  \varepsilon_i$.
 \item[(b)] If additionally $\sqrt{n} \lambda \mu |\mathcal{S}| \rightarrow 0$, then  $\| \bm{\Delta} \|_\infty = o_p(1)$, where $\bm\Delta=\sqrt{n}(\widehat{\mathbf{M}} \widehat{\bm\Gamma}-\mathbf I)(\widehat{\bm\theta}-\bm\theta_{0})$.
 \end{itemize}
\end{theorem}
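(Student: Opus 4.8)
For part (a), I would first exploit that the Kaplan--Meier weight $w_i$ carries the factor $\delta_i$, so censored observations are zero-weighted and $\varepsilon_i=y_i-\bm\phi_i^\top\bm\theta_0$ whenever $w_i>0$; hence $v_j=\sqrt n\,\widehat{\bm m}_j^\top\bm U_n$ exactly, where $\bm U_n=\sum_{i=1}^n w_i\bm\phi_i\varepsilon_i$ and its $k$-th coordinate $\sum_i w_i\varphi_k(\bm\phi_i,y_i)$ is the Kaplan--Meier integral of $\varphi_k(\bm\phi,y)=\phi_k(y-\bm\phi^\top\bm\theta_0)$ defined before Condition \ref{cond:1}. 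The crux is to replace this dependent sum (the $w_i$ are cumulative products) by an i.i.d.\ average: I would invoke Stute's i.i.d.\ representation \cite{stute1996distributional}, which under Conditions \ref{cond:1} and \ref{cond:4} yields $\bm U_n=\tfrac1n\sum_{i=1}^n\bm\zeta_i+\bm R_n$ with $\bm\zeta_i=(\zeta_1(\bm\phi_i,y_i),\dots,\zeta_p(\bm\phi_i,y_i))^\top$ i.i.d., mean zero and covariance $\bm\Sigma$, and a remainder $\bm R_n$ whose contraction $\sqrt n\,\widehat{\bm m}_j^\top\bm R_n$ is asymptotically negligible. This reduces $v_j$ to $S_n=\tfrac{1}{\sqrt n}\sum_{i=1}^n\widehat{\bm m}_j^\top\bm\zeta_i$, the remainder being bounded via H\"older's inequality $|\widehat{\bm m}_j^\top\bm R_n|\le\|\widehat{\bm m}_j\|_1\|\bm R_n\|_\infty$ together with the $\ell_1$ bound on $\widehat{\bm m}_j$ that I would derive from the optimization in (\ref{eq:2.5}).

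It then remains to prove a triangular-array central limit theorem for $S_n$, complicated by the fact that $\widehat{\bm m}_j$ is random and depends on the whole sample. I would verify the Lindeberg condition using the second constraint in (\ref{eq:2.5}), $\|\mathbf W^{1/2}\bm\Phi\widehat{\bm m}_j\|_\infty\le n^{c_0}$ with $c_0<1/2$, which (together with the bounded covariates in Condition \ref{cond:2} and the sub-Gaussianity of $\varepsilon$) supplies the uniform control of the individual summands needed to make the maximal term negligible relative to $\sqrt n$. The limiting variance is identified by showing the quadratic form $\widehat{\bm m}_j^\top\bm\Sigma\widehat{\bm m}_j$ converges in probability to $\bm m_j^\top\bm\Sigma\bm m_j=\Lambda_{jj}$, where $\bm m_j$ is the $j$-th row of $\mathbf M=\bm\Gamma^{-1}$; Condition \ref{cond:3} keeps $\Lambda_{jj}$ bounded away from $0$ and $\infty$. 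Combining the two gives $S_n\stackrel{d}{\rightarrow}\mathcal N(0,\Lambda_{jj})$, which is part (a).

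For part (b) I would begin from the entry-wise bound $\|\bm\Delta\|_\infty\le\sqrt n\,\big(\max_{j,k}|(\widehat{\mathbf M}\widehat{\bm\Gamma}-\mathbf I)_{jk}|\big)\,\|\widehat{\bm\theta}-\bm\theta_0\|_1$. The first factor is controlled deterministically: since $\widehat{\bm\Gamma}$ is symmetric, the constraint $\|\widehat{\bm\Gamma}\widehat{\bm m}_i-\bm e_i\|_\infty\le\mu$ forces the entry-wise $\ell_\infty$ norm of $\widehat{\mathbf M}\widehat{\bm\Gamma}-\mathbf I$ to be at most $\mu$, row by row. The second factor is the weighted-Lasso $\ell_1$-estimation error, which I would bound as $\|\widehat{\bm\theta}-\bm\theta_0\|_1=O_p(|\mathcal S|\lambda)$ by the standard basic-inequality argument: the deviation bound $\|\bm\Phi^\top\mathbf W\bm\varepsilon/n\|_\infty\lesssim\lambda$ holds with high probability under Condition \ref{cond:2} and $\lambda=O(\sqrt{\log p/n})$, while a restricted-eigenvalue inequality for $\widehat{\bm\Gamma}$ follows from the population restricted eigenvalue in Condition \ref{cond:3} together with concentration of $\widehat{\bm\Gamma}$ about $\bm\Gamma$ (using $|\mathcal S|\sqrt{\log p/n}=o(1)$). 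Multiplying yields $\|\bm\Delta\|_\infty=O_p(\sqrt n\,\mu\lambda|\mathcal S|)$, which is $o_p(1)$ under the added assumption $\sqrt n\,\lambda\mu|\mathcal S|\to0$.

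The main obstacle sits in part (a): Stute's representation is a fixed-dimension result, so making $\sqrt n\,\widehat{\bm m}_j^\top\bm R_n$ negligible requires uniform control of the Kaplan--Meier expansion over the $p\to\infty$ coordinates, and the data-dependence of $\widehat{\bm m}_j$ must be reconciled with the CLT through the $n^{c_0}$ constraint and the quadratic-form convergence $\widehat{\bm m}_j^\top\bm\Sigma\widehat{\bm m}_j\to\Lambda_{jj}$. Verifying these two facts, rather than the CLT mechanics themselves, is where the real work lies; by contrast, part (b) is essentially routine once the weighted restricted-eigenvalue and score-deviation bounds are in place.
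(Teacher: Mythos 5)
Your part (b) matches the paper's argument essentially verbatim: the bound $\|\bm\Delta\|_\infty\le\sqrt n\,\mu\,\|\widehat{\bm\theta}-\bm\theta_0\|_1$ from the first constraint in (\ref{eq:2.5}), followed by the standard basic-inequality plus restricted-eigenvalue derivation of $\|\widehat{\bm\theta}-\bm\theta_0\|_1=O_p(\lambda|\mathcal S|)$ (the paper isolates the transfer of Condition \ref{cond:3} to the sample matrix $\widehat{\bm\Gamma}$ as its Lemma \ref{lemma1}). No issues there.

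The gap is in part (a). You propose to keep the data-dependent vector $\widehat{\bm m}_j$ inside the sum and prove a triangular-array CLT for $S_n=n^{-1/2}\sum_i\widehat{\bm m}_j^\top\bm\zeta_i$ by verifying a Lindeberg condition via the constraint $\|\mathbf W^{1/2}\bm\Phi\widehat{\bm m}_j\|_\infty\le n^{c_0}$. That step does not go through: $\widehat{\bm m}_j$ is a function of the entire sample (through $\widehat{\bm\Gamma}=\bm\Phi^\top\mathbf W\bm\Phi/n$, which depends on all the $\bm\phi_i$, on the ordering of the $y_i$, and on the $\delta_i$), so the summands $\widehat{\bm m}_j^\top\bm\zeta_i$ are not independent across $i$, and controlling their individual magnitudes cannot by itself deliver asymptotic normality. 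The paper decouples first: it writes $v_j=\widetilde v_j+u_j$ with $\widetilde v_j=\sqrt n\sum_i w_i\bm m_j^\top\bm\phi_i\varepsilon_i$ built from the deterministic $\bm m_j$ (so Stute's CLT applies, uniformly over $j$ under Condition \ref{cond:2}), and $u_j=\sqrt n\sum_i w_i(\widehat{\bm m}_j-\bm m_j)^\top\bm\phi_i\varepsilon_i$, which is shown to be $o_p(1)$ using the rate $\max_{j\in[p]}\|\widehat{\bm m}_j-\bm m_j\|_2=O_p(\sqrt{|\mathcal S|\log p/n})$ together with $|\mathcal S|\sqrt{\log p/n}=o(1)$. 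Your quadratic-form convergence $\widehat{\bm m}_j^\top\bm\Sigma\widehat{\bm m}_j\to\Lambda_{jj}$ would follow from that same rate, but it only identifies the limiting variance once a CLT for a deterministic contraction is already in hand; it does not repair the dependence problem. Reordering your argument---substitute $\bm m_j$ first, apply Stute's representation and CLT to that term, then bound the $\widehat{\bm m}_j-\bm m_j$ remainder---recovers the paper's proof.
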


Define the normalized matrix $\bm{\Lambda}^0$ with
$\Lambda_{jk}^0 = {\Lambda_{jk}}/{\sqrt{\Lambda_{jj} \Lambda_{kk}}}$. For a given constant $\xi >0$, define $\Xi(\xi, b) \equiv \{ (i,j): 1\leq i,j \leq p, |\Lambda_{ij}^0| \geq b(\log p)^{-2-\xi} \}$ for some constant $b>0$. Theorem \ref{theorem2} below establishes that the proposed procedure controls the hierarchical FDR at level $\alpha$.

\begin{theorem}
\label{theorem2}
Assume that the conditions of Theorem \ref{theorem1} hold. If there exist some positive constants $b, \xi$, such that $|\Xi(\xi, b)| = o(p^{1+\rho})$ for some $\rho \in[0,1)$, and $|\{(i,j): |\Lambda_{ij}^0|> (1-\rho)/(1+\rho) \}| = O(p)$.
Then for the hierarchical FDR control procedure described in Algorithm \ref{Alg:1}, 
$$
\limsup \limits_{(n,d)\rightarrow \infty} \text{FDR} \leq \alpha.
$$
\end{theorem}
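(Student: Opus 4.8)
The plan is to show that at the data-driven threshold $t_0$ defined in (\ref{eq:algorithm1}), the true false discovery proportion $\sum_{j=0}^d \text{FDP}_j(t_0)$ is asymptotically dominated by the plug-in ratio $dG(t_0)(1+qG(t_0))/(R(t_0)\vee1)$, which by construction is at most $\alpha$; since each $\text{FDP}_j$ lies in $[0,1]$, the bounded convergence theorem then upgrades this in-probability bound to $\limsup\text{FDR}\le\alpha$. The entire argument reduces to a uniform-in-$t$ approximation of the tail counts built from the null statistics $\{U_j:j\in\mathcal{S}^c\}$ by their Gaussian expectations.

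First I would record the distributional behavior of the statistics. For $j\in\mathcal{S}^c$ we have $\theta_{0,j}=0$, so Theorem \ref{theorem1}(a)--(b) combined with the (uniform over $j$) consistency of $\widehat{\Lambda}_{jj}$ for $\Lambda_{jj}$ give $U_j=v_j/\sqrt{\Lambda_{jj}}+o_p(1)\stackrel{d}{\rightarrow}\mathcal{N}(0,1)$, and the vector $(v_j/\sqrt{\Lambda_{jj}})_{j\in\mathcal{S}^c}$ is asymptotically jointly Gaussian with correlation matrix $\bm\Lambda^0$. I would then use a Cram\'er-type moderate-deviation expansion, valid uniformly for $0\le t\le t_p$, to obtain the marginal relation $P(|U_j|\ge t)=G(t)(1+o(1))$ and, for any pair $(j,k)$ with $|\Lambda^0_{jk}|$ bounded away from $1$, the near-factorization $P(|U_j|\ge t,|U_k|\ge t)=G^2(t)(1+o(1))$. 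The sub-Gaussian error of Condition \ref{cond:2} supplies the moments that make these expansions accurate up to $t_p=(2\log p-2\log\log p)^{1/2}$.

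The core step is the uniform law of large numbers for the indicator sums:
$$
\sup_{0\le t\le t_p}\left|\frac{\sum_{j\in\mathcal{S}^c\cap[d]}1(|U_j|\ge t)}{|\mathcal{S}^c\cap[d]|\,G(t)}-1\right|\stackrel{P}{\rightarrow}0,
\qquad
\sup_{0\le t\le t_p}\left|\frac{\sum_{j\in[d]}\sum_{k\in\mathcal{S}^c\cap\mathcal{B}_{2j}}1(|U_j|\ge t,|U_k|\ge t)}{\sum_{j\in[d]}|\mathcal{S}^c\cap\mathcal{B}_{2j}|\,G^2(t)}-1\right|\stackrel{P}{\rightarrow}0.
$$
For each, the numerator's mean is handled by the marginal and joint tail expansions above, and the fluctuations by a second-moment bound showing that the variance of the indicator sum is of smaller order than the square of its mean. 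The variance bound is exactly where the dependence conditions enter: the contribution of moderately correlated pairs is controlled by $|\Xi(\xi,b)|=o(p^{1+\rho})$, whereas the near-perfectly-correlated pairs in $\{(i,j):|\Lambda^0_{ij}|>(1-\rho)/(1+\rho)\}$, of cardinality $O(p)$, are treated separately because their joint tails need not factorize into $G^2(t)$; the assumption $|\mathcal{S}^c|\to\infty$ ensures the normalizing means diverge. Discretizing $[0,t_p]$ into polynomially many points and using monotonicity of the indicators in $t$ upgrades pointwise control to the stated suprema.

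Combining the two uniform statements with $|\mathcal{S}^c\cap[d]|\le d$ and $\sum_{j\in[d]}|\mathcal{S}^c\cap\mathcal{B}_{2j}|\le dq$ yields, uniformly in $t$,
$$
\text{FDP}_0(t)+\sum_{j\in[d]}\text{FDP}_j(t)\le\frac{dG(t)(1+qG(t))}{R(t)\vee1}\,(1+o_p(1)).
$$
Evaluating at $t=t_0$, where the right-hand ratio is at most $\alpha$ by (\ref{eq:algorithm1}) (the boundary case $t_0=\sqrt{2\log p}$ being negligible), gives $\sum_{j=0}^d\text{FDP}_j(t_0)\le\alpha(1+o_p(1))$, and bounded convergence delivers the conclusion. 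The main obstacle is the uniform tail approximation of the second and third paragraphs: one must certify the Gaussian tail approximation for $P(|U_j|\ge t)$ over the entire moderate-deviation range $t\le t_p$ rather than for fixed $t$, and simultaneously keep the dependence-induced variance $o$ of the squared mean. Controlling the near-degenerate pairs, whose joint tail probabilities do not collapse to $G^2(t)$, is the delicate part, and it is precisely what the two cardinality conditions on $\bm\Lambda^0$ are engineered to handle.
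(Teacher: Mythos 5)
Your proposal is correct and follows essentially the same route as the paper's proof: asymptotic standard normality of the null statistics $U_j$, a uniform-in-$t$ convergence of the empirical tail counts to $dG(t)$ and $dqG^2(t)$ (the paper's quantities $\nu_1,\nu_2\to 0$, with the dependence conditions on $\bm\Lambda^0$ entering through the second-moment bounds exactly as you describe, via Liu's Lemma 6.4), the plug-in bound at $t_0$ from the definition in Algorithm 1, and Fatou/bounded convergence to pass from FDP to FDR. The only place you are lighter than the paper is the boundary case $t_0=\sqrt{2\log p}$, which you dismiss as negligible but which the paper handles explicitly by showing via Gaussian tail bounds that no null hypothesis is rejected at that threshold with probability tending to one; your claim there is correct and the needed argument is standard.
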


The magnitude of the cardinality $|\Xi(\xi, b)|$ measures the number of highly correlated test statistics. The condition $|\Xi(\xi, b)| = o(p^{1+\rho})$ is mild since $b(\log p)^{-2-\xi}$ goes to zero at a slow logarithmic rate and the cardinality $|\Xi(\xi, b)|$ is smaller than a polynomial order $p^{1+\rho}$ for some $\rho \in[0,1)$. Theorem \ref{theorem2} guarantees that the proposed procedure controls FDR hierarchically at the pre-assigned level $\alpha$. 

\section{Simulation}
\label{sec:4}

Data is generated from model (\ref{eq:2.1}) where the high-dimensional main effect design matrix $\mathbf{X}\in \mathbb{R}^{n\times d}$ is generated by drawing its rows independently from $\mathcal{N}(0,\bm{\Sigma}_{X})$. The covariance matrix $\bm{\Sigma}_X$ has an auto-correlation structure with the $(i,j)$-th entry being $\eta^{|i-j|}$ for some constant $\eta \in(0,1)$.
The low-dimensional main effect design matrix $\mathbf{Z} \in \mathbb{R}^{n\times q}$ is generated from $q=5$  independent standard normal distributions.  The augmented design matrix is $\bm{\Phi}\in\mathbb{R}^{n\times p}$. The values of the nonzero entries of the true coefficient vector $\bm{\theta}_0$ are 2 for the main effects and 1 for the interactions. The nonzero coefficient index set $\mathcal{S}$ varies case-by-case.  
The failure times are exponentially distributed with rates $\exp(-a\bm{\Phi}^\top \bm{\theta}_0)$. The magnitude of $a$ controls the strength of the signals.  The censoring times are generated independently and adjusted to achieve certain censoring rates.

We conduct multiple sets of simulation to compare the performance of the proposed procedure with six highly relevant alternatives and examine the impact of sample size ($n$), censoring rate ($r$), sparsity level ($s$), dimension ($p$), correlation ($\eta$), and signal magnitude ($a$). Specifically, the alternatives include: BH and BH-Hierarchy, { which apply the Benjamini-Hochberg (BH) technique and derive p-values from the conventional univariable AFT model and the marginal weighted least squared estimation without penalization \citep{benjamini1995controlling}. The R function \emph{p.adjust} in the package \textbf{stats} is employed.} The latter conducts hierarchical FDR controlling; Surv-FCD, which is similar to the proposed procedure except that it controls the non-hierarchical FDR. 
{ The debiased Lasso estimator is obtained using the code from 
\url{https://web.stanford.edu/~montanar/sslasso/code.html};} 
{ The rest three alternatives conduct variable selection (VS) without an explicit FDR control. With their popularity, comparing with them can help better benchmark the proposed method.
} 
VS-D-Lasso, VS-Lasso, and VS-MCP conduct penalized estimation and selection using the debiased Lasso, Lasso, and MCP techniques, respectively. 
{
The VS-MCP method replaces the Lasso penalty in (\ref{eq:2.3}) with the MCP 
$\rho(\theta_j; \lambda, \xi)=\lambda  \int_0^{|\theta_j|} (1-x/(\lambda \xi))_+dx$,  where $\xi$ is a regularization parameter. We refer to \cite{chai2019inference, zhang2010nearly} for more information on the computation and other aspects of MCP. The R function \emph{ncvsurv} in the package \textbf{ncvreg} is utilized to realize the VS-Lasso and VS-MCP methods.}
We set the target FDR as $\alpha = 0.1$ and examine empirical FDR and power based on {{200}} replicates under each setting. 

Denote the number of nonzero coefficients corresponding to $\mathbf{X}$ as $s_\alpha$. The nonzero indices of $\bm{\theta}_0$ are generated as $\mathcal{S} =  [s_\alpha]\cup \{d+2, d+5\}\cup \{( d+kq +2, d+kq+5): k\in [s_\alpha] \}$. $\mathcal{S}$ corresponds to the first  $s_\alpha$ entries of $\{\alpha_{0,j}\}_1^d$ as well as the second and fifth entries of $\{\gamma_{0,j}\}_1^q$, and the interactions associated with them are nonzero. Unless otherwise specified, we set $n = 500$, $r = 0.2$, $s_\alpha = 10$, $p = 1205$, $\eta = 0.3$,  $a = 1$, and $d = 200$.  Note that $s = 3s_\alpha+2$ and $p=6d+5$. The results for $a=1$ are presented in Figures \ref{fig1} and \ref{fig2} in the main text, and those for $a=2$ are presented in Figures \ref{fig3} and \ref{fig4} in Appendix. Among them, Figures \ref{fig1} and \ref{fig3} are on empirical FDR, and Figures \ref{fig2} and \ref{fig4} are on empirical power. 

\textbf{Effect of sample size:} We set $n = 300$, 500, 700, 900, and $1100$. As Figure \ref{fig2} shows, the power of all methods grows as sample size increases, with the proposed procedure having relatively competitive power while controlling FDR hierarchically even when the sample size is small. Note that when $n = 300$, only the proposed procedure controls FDR below the target, while all the others fail (see Figure \ref{fig1}). 

\textbf{Effect of censoring rate:} We consider $r = 0$, 0.1, $\cdots$, 0.7. As expected, performance deteriorates as censoring increases, {which may be attributable to the decrease in effective sample size}. When $r=0.7$, only the proposed procedure can successfully control FDR. It is also observed to have satisfactory power.

\textbf{Effect of sparsity level:} We set $s_\alpha=$10, 20, 30, 40, and 50, which correspond to $s=32$, 62, 92, 122, and 152, respectively. As Figure \ref{fig2} shows, power decreases as the number of nonzero effects increases, while FDR control remains stable. {The estimation of $\bm{\theta}$ deteriorates as $s$ increases, which fits Theorem \ref{theorem1} and previous studies \citep{chai2019inference, jankova2016confidence}.} Surv-FCD, VS-MCP, and the proposed procedure control FDR for all $s$ values. The proposed procedure has the highest power. 

\textbf{Effect of dimension:} We set $d = 100$, 200, 300, 400, and 500, which correspond to $p = 605$, 1205, 1805, 2405, and 3005, respectively. { It has been well recognized that a higher dimension usually poses more challenges to estimation and FDR control.} When the dimension is high, some of the methods (BH, BH-Hierarchy, and VS-D-Lasso) fail to control FDR. The proposed procedure, Surv-FCD, VS-MCP, and VS-Lasso control FDR for all dimension values, and all of these methods have power approaching 1.

\textbf{Effect of correlation:} We consider $\eta$ in the set $\{0.1, 0.2,\cdots,0.8\}$. { As correlation increases, it is more difficult to distinguish between important effects and noises.} It is observed that the proposed procedure and VS-Lasso control FDR for all correlation levels while maintaining power at a desirable level. The other methods fail to control FDR.

\textbf{Effect of signal:}  We consider $a= 0.5$, 1, $\cdots$, 3. { Stronger signals are easier to identify, which may consequently lead to higher empirical power under a pre-specified FDR level.} It is observed that empirical FDR becomes less conservative with the increase of signal strength except for the BH and BH-Hierarchy methods, since these two methods conduct marginal analysis. Again, the proposed procedure is observed to have competitive performance.

{
To gain additional insight, we also examine the performance of the proposed 
method and its alternatives under different pre-specified FDR levels and a different survival data distribution. In particular, we set the FDR levels as 0.05, 0.1, 0.15, 0.20, 0.25, and 0.30. 
We consider two survival models: under the first model, the survival time is exponentially distributed with rate $\exp(-\bm{\Phi}^\top \bm{\theta}_0)$; and under the second model, it is log-logistic distributed with hazard function $1/(\exp{(\bm{\Phi}^\top \bm{\theta}_0)}+T)$. Note that the first model satisfies both the Proportional Hazards (PH) and Accelerated Failure Time (AFT) assumptions, while the latter only satisfies the AFT assumption. The results for the empirical FDR and power are presented in Figure \ref{fig5} in Appendix. The average mean square errors (MSEs) of all the parameters and runtimes are presented in Table \ref{sim:1}. With the exponentially distributed failure time, the proposed method, Surv-FCD, VS-Lasso, and VS-MCP control FDR satisfactorily for all the pre-specified FDR values, with the empirical power approaching 1. As shown in Table \ref{sim:1}, incorporating the debiased estimation results in better parameter estimation performance in terms of MSEs, and the performance may be further improved by introducing the control of FDR. It is noted that VS-MCP has superior performance, which aligns with observations in the literature. However, it can be very challenging to establish the asymptotic distribution properties of MCP estimates. 
Under the log-logistic model, only the proposed and VS-Lasso methods can control FDR successfully for all the pre-specified FDR levels, and the proposed method significantly outperforms the VS-Lasso method in terms of MSEs. Table \ref{sim:1} shows that the marginal methods (BH and BH-Hierarchy) are the fastest, and the proposed method has a computational cost similar to that of Surv-FCD and VS-D-Lasso. And these three methods are only slightly slower than VS-MCP.
}

\section{Analysis of breast cancer data}
\label{sec:5}

To demonstrate the practical utility of the proposed approach, we analyze a breast cancer dataset. There has been extensive research linking breast cancer survival with omics measurements, and genetic interactions have been established as playing an important role beyond the main genetic effects. The analyzed data is retrieved from the Molecular Taxonomy of Breast Cancer International Consortium (METABRIC) \url{https://www.cbioportal.org/study/summary?id=brca\_metabric} and has been analyzed in the literature \citep{curtis2012genomic, rueda2019dynamics}. The dataset contains records on 1,903 subjects. The outcome of interest is overall survival, which is subject to right censoring and has been extensively studied. 1,103 patients died during followup, and the event times {range} from 0.1 to 355.2 months, with a median of 85.93 months. The rest 800 observations are censored, and the observed times range from 0.77 to 337.03 months, with a median of 158.03 months. 

For the high-dimensional covariates, we consider gene expressions measured using the RNA-seq technique. Measurements are available for 18,485 genes. Considering the limited sample size, and to generate more reliable results, we first conduct screening as follows. We retrieve gene pathway information from the Kyoto Encyclopedia of Genes and Genomes (KEGG) and focus on the following pathways that are more likely to be breast cancer relevant: breast cancer pathway (hsa05224), homologous recombination pathway (hsa03440), fanconi anemia pathway (hsa03460), PI3K-Akt signaling pathway (hsa04151), and pathways in cancer (hsa05200). A total of 782 genes are matched to those five pathways. Additionally, we conduct a supervised screening and retain the genes with marginal correlations greater than 0.05 with the outcome, which leads to 312 genes for downstream analysis. It is noted that similar screenings have been common in the literature.
For the low-dimensional covariates, we consider nonsynonymous Tumor Mutation Burden (TMB, which is defined as the number of somatic nonsynonymous mutations divided by the DNA sequenced megabase and ranges from 0 to 104.601 mutations/Mb), age at diagnosis (ranging from 21.93 to 96.29 years), and Estrogen Receptor status (ER, an indicator of endocrine responsiveness, and with 1 for positive and 0 for negative). The total number of effects, main and interaction combined, is 1,251.

The analysis results using the proposed procedure { with a  target FDR level $\alpha =0.2$} are presented in Table  \ref{realdata:1}. A total of 10 genes and 3 interactions are identified. A {quick} literature search suggests that these genes have important implications for breast cancer survival {(details are presented in Appendix)}. 
Analysis is also conducted using the alternatives, and the findings are presented in the Appendix. The summary comparison results are presented in Table \ref{realdata:2}, where we observe that different methods lead to quite different findings. The overall observed patterns are consistent with the simulation study.

\begin{figure}[h]
\centering
\includegraphics[height=3.8 in, width=5.0 in]{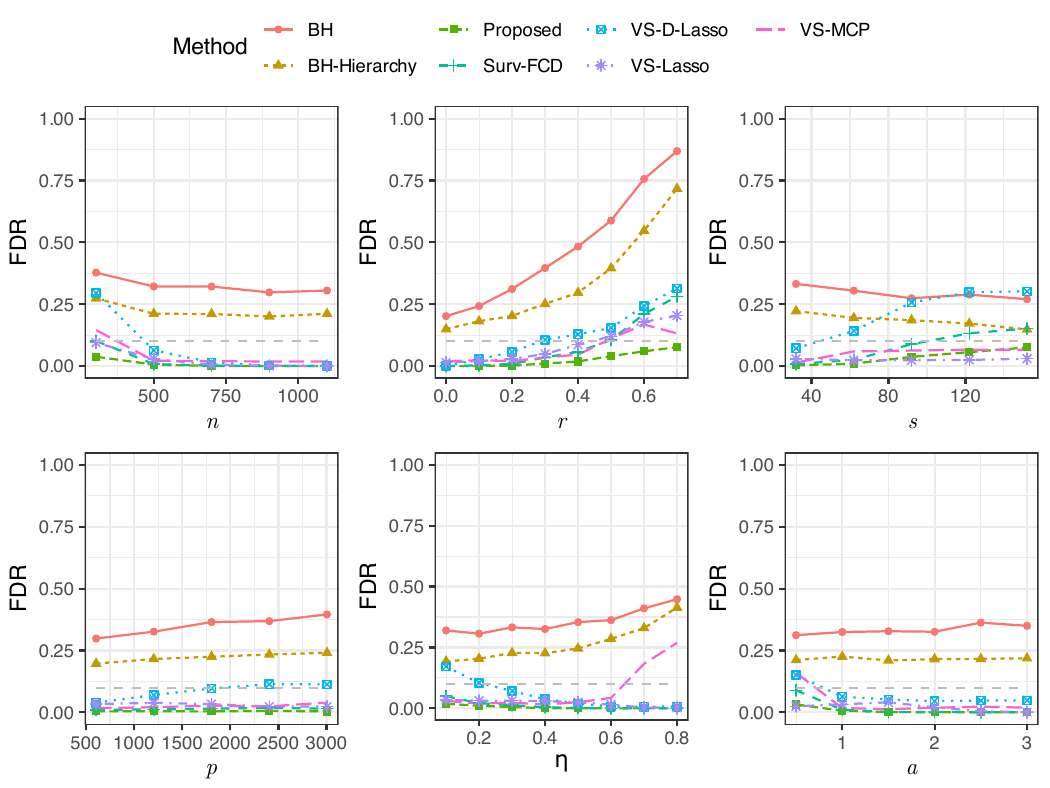}
\caption{Empirical FDR based on 200 replicates with $\alpha=0.1$ and $a=1$ (except for the right bottom subfigure with different $a$'s). 
Top panels (from left to right) correspond to various sample sizes, censoring rates, and numbers of nonzero covariates, respectively. Bottom panels (from left to right) correspond to different dimensions, correlation coefficients, and signal magnitudes, respectively. Dashed grey lines correspond to target FDR.}
\label{fig1}
\end{figure}

\begin{figure}[h]
\centering
\includegraphics[height=3.8 in, width=5.0 in]{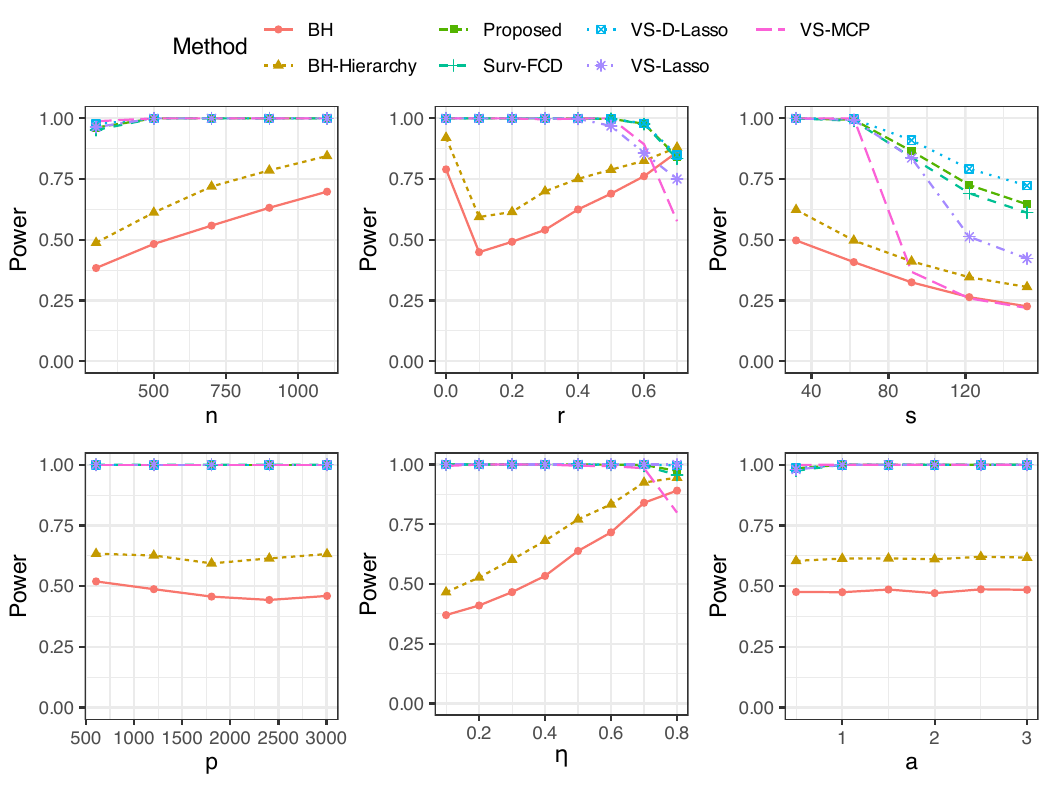}
\caption{Empirical power based on 200 replicates with $\alpha=0.1$ and $a=1$ (except for the right bottom subfigure with different $a$'s). 
Top panels (from left to right) correspond to various sample sizes, censoring rates, and numbers of nonzero covariates, respectively. Bottom panels (from left to right) correspond to different dimensions, correlation coefficients, and signal magnitudes, respectively.
}
\label{fig2}
\end{figure}

\clearpage
\begin{table}
\setlength{\abovecaptionskip}{0pt}
\setlength{\belowcaptionskip}{5pt}
\caption{Simulation: average MSEs and runtimes based on 200 replicates under different pre-specified FDR level. All methods are implemented with R 4.3.1 on an MacOS computer with an 
Apple M1 Max chip, 10-core CPU, 32 GB of RAM, and 1 TB of disk space.}
\label{sim:1}
\centering
\resizebox{320pt}{55mm}{
\begin{tabular}{cccccccc}
\hline
Survival model                 & Method       & \multicolumn{6}{c}{Pre-specified FDR level $\alpha$}                        \\
\cline{3-8}
                               &              & 0.05   & 0.1    & 0.15   & 0.2    & 0.25   & 0.3    \\
                               \hline
\multirow{14}{*}{Exponential}  &              & \multicolumn{6}{c}{MSEs($\times 10^{3}$)}            \\
                               & Proposed     & 2.477  & 2.478  & 2.479  & 2.479  & 2.479  & 2.479  \\
                               & Surv-FCD     & 2.482  & 2.484  & 2.486  & 2.487  & 2.489  & 2.490  \\
                               & VS-D-Lasso   & 2.513  & 2.513  & 2.513  & 2.513  & 2.513  & 2.513  \\
                               & VS-Lasso     & 14.048 & 14.118 & 14.110 & 14.202 & 13.852 & 14.148 \\
                               & VS-MCP       & 0.351  & 0.357  & 0.345  & 0.349  & 0.349  & 0.341  \\
                               &              & \multicolumn{6}{c}{Runtime (seconds)}              \\
                               & BH           & 0.974  & 0.963  & 0.963  & 0.963  & 0.964  & 0.965  \\
                               & BH-Hierarchy & 0.316  & 0.316  & 0.320  & 0.324  & 0.331  & 0.336  \\
                               & Proposed     & 26.927 & 26.493 & 25.824 & 25.057 & 25.042 & 24.613 \\
                               & Surv-FCD     & 26.863 & 26.428 & 25.759 & 24.993 & 24.979 & 24.546 \\
                               & VS-D-Lasso   & 26.824 & 26.389 & 25.721 & 24.954 & 24.940 & 24.508 \\
                               & VS-Lasso     & 7.219  & 7.227  & 7.237  & 7.250  & 7.244  & 7.253  \\
                               & VS-MCP       & 16.922 & 16.921 & 16.930 & 16.978 & 16.951 & 16.925 \\
                               \hline
\multirow{14}{*}{Log-logistic} &              & \multicolumn{6}{c}{MSEs($\times 10^{3}$)}            \\
                               & Proposed     & 4.203  & 4.015  & 4.015  & 3.915  & 3.930  & 3.852  \\
                               & Surv-FCD     & 4.271  & 4.089  & 4.090  & 3.997  & 4.014  & 3.927  \\
                               & VS-D-Lasso   & 4.363  & 4.162  & 4.161  & 4.056  & 4.072  & 3.979  \\
                               & VS-Lasso     & 24.803 & 24.014 & 24.138 & 24.118 & 23.883 & 24.424 \\
                               & VS-MCP       & 8.512  & 8.200  & 8.502  & 8.130  & 8.188  & 8.899  \\
                               &              & \multicolumn{6}{c}{Runtime  (seconds)}              \\
                               & BH           & 0.967  & 0.922  & 0.917  & 0.918  & 0.922  & 0.916  \\
                               & BH-Hierarchy & 0.319  & 0.304  & 0.308  & 0.311  & 0.320  & 0.322  \\
                               & Proposed     & 25.004 & 24.286 & 24.031 & 24.151 & 24.087 & 23.641 \\
                               & Surv-FCD     & 24.911 & 24.205 & 23.949 & 24.070 & 24.008 & 23.561 \\
                               & VS-D-Lasso   & 24.867 & 24.169 & 23.912 & 24.033 & 23.971 & 23.525 \\
                               & VS-Lasso     & 7.261  & 7.166  & 7.306  & 7.319  & 7.388  & 7.411  \\
                               & VS-MCP       & 19.678 & 19.568 & 19.845 & 19.650 & 19.436 & 19.674 	\\
\hline                               
\end{tabular}
}
\end{table}

\clearpage
\begin{table}[]
\setlength{\abovecaptionskip}{0pt}
\setlength{\belowcaptionskip}{5pt}
\caption{Analysis of the breast cancer data using the proposed approach: identified main effects and interactions.}
\label{realdata:1}
\centering
\resizebox{210pt}{25mm}
{
\begin{tabular}{ccccc}
\hline
Gene            & Main effects & \multicolumn{3}{c}{ Interactions} \\
\cline{3-5}
                &           & TMB           & Age           & ER Status        \\
                \hline
COL6A6          & 0.113     &               &               &                  \\
IBSP            & -0.075    &               & 0.048         &                  \\
CDK2            & -0.053    &               &               &                  \\
CCND2           & 0.089     &               &               &                  \\
NR4A1           & -0.043    &               &               &                  \\
WNT6            & -0.030    & -0.039        &               &                  \\
CDC42           & -0.046    &               &               &                  \\
ESR1            & -2.322    &               &               & 2.442            \\
GSTM2           & 0.042     &               &               &                  \\
DLL3            & -0.038    &               &               &                  \\
\hline
                &           & \multicolumn{3}{c}{Low-dimensional covariates}                  \\
                \cline{3-5}
                &           & TMB           & Age           & ER Status        \\
\hline
Main effects &           & 0.045         & -0.064        & 4.701       \\
\hline    
\end{tabular}
}
\end{table}

\begin{table}[]
\setlength{\abovecaptionskip}{0pt}
\setlength{\belowcaptionskip}{5pt}
\caption{Analysis of the breast cancer data: numbers of main G effects and interactions identified by different methods and their overlaps. In each cell, number of identified main G effects/(number of identified interactions, number of identified interactions that respect the ``main effect, interactions" hierarchy).}
\label{realdata:2}
\centering
\resizebox{270pt}{11mm}
{
\begin{tabular}{cccccc}
\hline
         & Proposed & Surv-FCD & VS-D-Lasso   & VS-Lasso     & VS-MCP     \\
         \hline
Proposed & 10/(3,3) & 3/(2,1)  & 10/(3,3)  &  10/(3,3) & 3/(1,1) \\
Surv-FCD &          & 3/(7,1)  & 3/(7,1)   &  3/(7,1) & 2/(1,1)  \\
VS-D-Lasso  &          &          & 31/(30,3) & 22/(21,3) & 3/(1,1) \\
VS-Lasso       &          &          &           & 24/(21,3) & 3/(1,1)   \\
VS-MCP   &          &          &           &         & 4/(2,1)\\
\hline
\end{tabular}
}
\end{table}

\section{Discussion}
In this article, we have developed a new hierarchical FDR control approach for the analysis of a survival response and high-dimensional interactions. The proposed approach has a solid theoretical ground and satisfactory empirical performance. This study can enrich the paradigms of high-dimensional interaction analysis and FDR-based inference. 
{ 
In principle, the proposed analysis procedure can be directly extended to other data types/models as well as other types of interactions (for example, gene-gene interactions). However, as some theoretical developments are specific to the proposed model, significant nontrivial developments will be needed. Additionally, more numerical studies may be needed to better understand the finite-sample performance of the proposed approach.
}

\section*{Acknowledgments}
We thank the editor and reviewers for their careful review and insightful comments. 
This study is partly supported by the National Bureau of Statistics of China (2022LZ34), NIH (CA204120), Fundamental Research Funds for the Central Universities,  and Research Funds of the Renmin University of China (21XNH152).

\section*{Reference}
\bibliography{Citation}

\section*{Appendix}
\label{appendix}

\subsection*{Appendix A: Proofs of Theorems \ref{theorem1} and \ref{theorem2}}
\label{appendixA}

The compatibility condition for the sample weighted covariance matrix $\widehat{\bm{\Gamma}}$ is critical for the estimation error. To prove Theorem \ref{theorem1}, we first show that $\widehat{\bm{\Gamma}}$ satisfies the compatibility condition. With the high dimensionality, $\widehat{\bm{\Gamma}}$ is always singular, which makes the estimation of $\bm{\theta}_0$ challenging. A common assumption to deal with this problem is to require $\widehat{\bm{\Gamma}}$ to be nonsingular for a restricted set of directions. Under some mild conditions, we have the following lemma.

\begin{lemma}
\label{lemma1}
Under Conditions \ref{cond:1}-\ref{cond:3} and $|\mathcal{S}| \sqrt{\log p/n} = o(1)$, if $\max_{i,j\in[p]} |\widehat{\Gamma}_{ij} - {\Gamma}_{ij} |=O_p(\sqrt{\log p/n})$, we have 
\begin{equation*}
\inf_{\substack{\|\bm{a}_{\mathcal{S}^c} \|_{1} \leq  3\| \bm{a}_{\mathcal{S}}\|_{1} \\ \| \bm{a} \|_2 \neq 0}}
 \frac{\bm a^{\top}\widehat{\bm{\Gamma}} \bm a}{\|\bm{a}_{\mathcal{S}} \|_{2}^{2}}\geq \frac{c_{3}}{2}>0, \quad \text { as } n \rightarrow \infty .
\end{equation*}
\end{lemma}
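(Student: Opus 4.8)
The plan is to transfer the restricted eigenvalue property from the population matrix $\bm{\Gamma}$, which is guaranteed by Condition \ref{cond:3}, to the sample matrix $\widehat{\bm{\Gamma}}$ by controlling the quadratic perturbation $\bm{a}^\top(\widehat{\bm{\Gamma}} - \bm{\Gamma})\bm{a}$ uniformly over the cone $\mathcal{C} = \{\bm{a}: \|\bm{a}_{\mathcal{S}^c}\|_1 \leq 3\|\bm{a}_\mathcal{S}\|_1,\ \|\bm{a}\|_2 \neq 0\}$. First I would write $\bm{a}^\top \widehat{\bm{\Gamma}}\bm{a} = \bm{a}^\top \bm{\Gamma}\bm{a} + \bm{a}^\top(\widehat{\bm{\Gamma}} - \bm{\Gamma})\bm{a}$ and bound the perturbation by H\"older's inequality, $|\bm{a}^\top(\widehat{\bm{\Gamma}} - \bm{\Gamma})\bm{a}| \leq \|\widehat{\bm{\Gamma}} - \bm{\Gamma}\|_{\max}\|\bm{a}\|_1^2$, where $\|\cdot\|_{\max}$ denotes the entrywise maximum norm.

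The next step is to convert $\|\bm{a}\|_1^2$ into the denominator $\|\bm{a}_\mathcal{S}\|_2^2$ using the cone constraint. On $\mathcal{C}$ we have $\|\bm{a}\|_1 = \|\bm{a}_\mathcal{S}\|_1 + \|\bm{a}_{\mathcal{S}^c}\|_1 \leq 4\|\bm{a}_\mathcal{S}\|_1 \leq 4\sqrt{|\mathcal{S}|}\,\|\bm{a}_\mathcal{S}\|_2$ by Cauchy--Schwarz, hence $\|\bm{a}\|_1^2 \leq 16|\mathcal{S}|\,\|\bm{a}_\mathcal{S}\|_2^2$. I would also note that on $\mathcal{C}$ the constraint $\|\bm{a}_{\mathcal{S}^c}\|_1 \leq 3\|\bm{a}_\mathcal{S}\|_1$ together with $\|\bm{a}\|_2 \neq 0$ forces $\|\bm{a}_\mathcal{S}\|_2 \neq 0$, so dividing by $\|\bm{a}_\mathcal{S}\|_2^2$ is legitimate. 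Combining the two bounds yields $|\bm{a}^\top(\widehat{\bm{\Gamma}} - \bm{\Gamma})\bm{a}| \leq 16|\mathcal{S}|\,\|\widehat{\bm{\Gamma}} - \bm{\Gamma}\|_{\max}\,\|\bm{a}_\mathcal{S}\|_2^2$ for every $\bm{a} \in \mathcal{C}$.

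The crucial point is that the factor $16|\mathcal{S}|\,\|\widehat{\bm{\Gamma}} - \bm{\Gamma}\|_{\max}$ is free of $\bm{a}$, so the perturbation estimate holds uniformly over the cone. Under the hypothesis $\|\widehat{\bm{\Gamma}} - \bm{\Gamma}\|_{\max} = O_p(\sqrt{\log p/n})$ and the scaling $|\mathcal{S}|\sqrt{\log p/n} = o(1)$, this factor is $o_p(1)$; hence for any $\epsilon > 0$ there is an $N$ such that for $n \geq N$, with probability at least $1-\epsilon$, it is bounded by $c_3/2$. On that event, Condition \ref{cond:3} gives $\bm{a}^\top \widehat{\bm{\Gamma}}\bm{a} \geq \bm{a}^\top \bm{\Gamma}\bm{a} - (c_3/2)\|\bm{a}_\mathcal{S}\|_2^2 \geq c_3\|\bm{a}_\mathcal{S}\|_2^2 - (c_3/2)\|\bm{a}_\mathcal{S}\|_2^2 = (c_3/2)\|\bm{a}_\mathcal{S}\|_2^2$ simultaneously for all $\bm{a} \in \mathcal{C}$, and dividing by $\|\bm{a}_\mathcal{S}\|_2^2$ and taking the infimum delivers the claim.

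I do not expect a genuine obstacle, since the entrywise deviation bound is assumed rather than derived; the only point requiring care is the uniformity, namely ensuring the perturbation is controlled simultaneously over the entire cone rather than pointwise, which is handled precisely because the bounding factor does not depend on $\bm{a}$. Were the deviation bound not assumed, the main work would instead shift to establishing $\|\widehat{\bm{\Gamma}} - \bm{\Gamma}\|_{\max} = O_p(\sqrt{\log p/n})$, which for the Kaplan--Meier-weighted matrix $\widehat{\bm{\Gamma}} = \bm{\Phi}^\top\mathbf{W}\bm{\Phi}/n$ would require a concentration argument over the $p^2$ entries that accounts for the randomness in the weights, using the boundedness of the covariates in Condition \ref{cond:2} and a union bound.
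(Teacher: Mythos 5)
Your proposal is correct and follows essentially the same route as the paper: decompose $\bm a^\top\widehat{\bm\Gamma}\bm a$ into the population quadratic form plus a perturbation, bound the perturbation by $\|\widehat{\bm\Gamma}-\bm\Gamma\|_{\max}\|\bm a\|_1^2 \leq 16|\mathcal S|\,\|\widehat{\bm\Gamma}-\bm\Gamma\|_{\max}\|\bm a_{\mathcal S}\|_2^2$ via the cone constraint and Cauchy--Schwarz, and absorb it using $|\mathcal S|\sqrt{\log p/n}=o(1)$. Your treatment is, if anything, slightly more explicit than the paper's about the uniformity over the cone and the high-probability interpretation of the conclusion.
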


\begin{proof}[Proof of Lemma 1]
With $\| \bm{a}_{\mathcal{S}^c} \|_{1} \leq  3\| \bm{a}_\mathcal{S}\|_{1}$, we have:
\begin{equation*}
\begin{aligned}
\bm a^{\top} \bm\Gamma \bm a-\bm a^{\top} \widehat{\bm\Gamma} \bm a 
& \leq |\bm{a}^{\top}(\bm\Gamma-\widehat{\bm\Gamma}) \bm{a} | 
\leq \|\bm{a}\|_{1} \|(\bm\Gamma-\widehat{\bm\Gamma}) \bm{a}\|_{\infty} 
\leq \|\bm{a}\|_{1}^{2}  \max _{i,j \in [p]}|\widehat{\Gamma}_{ ij}-\Gamma_{i j}|\\
& \leq 16 \| \bm{a}_{\mathcal{S}} \|_{1}^2  \max _{i,j\in [p]}|\widehat{\Gamma}_{ ij}-\Gamma_{i j}|  
 \leq 
 16 |\mathcal{S}| \cdot  \|\bm{a}_{\mathcal{S}} \|_{2}^{2}  \max _{i,j\in [p]}|\widehat{\Gamma}_{ ij}-\Gamma_{i j}|. 
\end{aligned}
\end{equation*}
Therefore, 
$$\bm a^{\top} \widehat{\bm\Gamma} \bm a \geq \bm a^{\top} \bm\Gamma \bm a -  16 |\mathcal{S}| \cdot \|\bm{a}_{\mathcal{S}}\|_{2}^{2}  \max _{i,j\in [p]}|\widehat{\Gamma}_{ ij}-\Gamma_{i j}|,$$
and
\begin{equation*}
\begin{aligned}
\inf_{\substack{\|\bm{a}_{\mathcal{S}^c} \|_{1} \leq  3\| \bm{a}_{\mathcal{S}}\|_{1} \\ \| \bm{a}\|_2 \neq 0}} \frac{\bm a^{\top}\widehat{\bm{\Gamma}} \bm a}{\|\bm{a}_{\mathcal{S}}\|_{2}^{2}}
& \geq 
\inf_{\substack{\|\bm{a}_{\mathcal{S}^c} \|_{1} \leq  3\| \bm{a}_{\mathcal{S}}\|_{1} \\ \| \bm{a}\|_2 \neq 0}} \frac{\bm a^{\top}{\bm{\Gamma}} \bm a}{\|\bm{a}_{\mathcal{S}}\|_{2}^{2}} -
16 |\mathcal{S}|   \max _{i,j\in [p]}|\widehat{\Gamma}_{ ij}-\Gamma_{i j}|\\
& \geq c_3 - 16 |\mathcal{S}|  \max _{i,j \in [p]}|\widehat{\Gamma}_{ ij}-\Gamma_{i j}| \geq {c_3}/{2},
\end{aligned}
\end{equation*}
where the last inequality follows from $\max _{i,j\in [p]}|\widehat{\Gamma}_{ ij}-\Gamma_{i j}| = O_p(\sqrt{\log p /n})$ and $|\mathcal{S}|\sqrt{\log p /n} = o(1)$. This completes the proof of Lemma \ref{lemma1}.
\end{proof}


\begin{proof}[Proof of Theorem \ref{theorem1} (a)]

Define $\widehat{\bm\Phi}^{\star}=\bm\Phi \widehat{\mathbf M}^\top$ and $\bm\Phi^{\star}=\bm\Phi \mathbf M$. We can rewrite $\bm{v}$ as:
\begin{equation}
\label{eq:A.1}
\bm v=\frac{1}{\sqrt{n}} \widehat{\mathbf M} \bm\Phi^{\top} \mathbf W \bm\varepsilon 
= \frac{1}{\sqrt{n}}  \bm\Phi^{\star\top} \mathbf W \bm\varepsilon + \frac{1}{\sqrt{n}}(\widehat{\bm\Phi}^{\star}-\bm\Phi^{\star})^{\top} \mathbf{W} \bm\varepsilon.
\tag{A.1}
\end{equation}
We next show that the first part on the right hand side of (\ref{eq:A.1}) is asymptotically normal and that the second part is dominated by the first part in probability.

Let $\widetilde{\bm v} =   \bm\Phi^{\star\top} \mathbf W \bm\varepsilon/{\sqrt{n}}$ and its $j$-th entry be  $\widetilde{v}_j = \sqrt{n} \sum_{i=1}^n w_i {\bm{m}}_j^\top \bm{\phi}_i  \varepsilon_i$. Under Conditions \ref{cond:1} and \ref{cond:4}, by \cite{stute1996distributional}, we have that $\widetilde{v}_j$ is asymptotically normal:
 $$\widetilde{v}_j = \sqrt{n} \sum_{i=1}^n w_i {\bm{m}}_j^\top \bm{\phi}_i  \varepsilon_i \stackrel{d}{\rightarrow}  \mathcal{N}(0, \Lambda_{jj}), $$
where  $\Lambda_{jj} = \bm{m}_j^\top\mathbf{\Sigma}\bm{m}_j$. Furthermore, the convergence properties for $\widetilde{v}_j$ are uniform for all $j \in[p]$ under the sub-Gaussian error and bounded covariates conditions in Condition \ref{cond:2} by the same arguments in \cite{chai2019inference}.
Consider the second part on the right hand side of (\ref{eq:A.1}). Define $u_j = \sqrt{n} \sum_{i=1}^n w_i(\widehat{\bm{m}}_j -\bm{m}_j)^\top \bm{\phi}_i \varepsilon_i$. Following similar arguments as Lemma A.20 in \cite{dai2023scale} and \cite{stute1996distributional}, under the conditions of Theorem \ref{theorem1}, we can derive that $\max_{j\in[p]} \|\widehat{\bm{m}}_j -\bm{m}_j\|_2 = O_p(\sqrt{|\mathcal{S}| \log p/n})$. Therefore, by Slutsky's lemma, we have $u_j = o_p(1)$ under the condition that $|\mathcal{S}|\sqrt{\log p/n} = o(1)$. This completes the proof of Theorem \ref{theorem1} (a).
\end{proof}

\begin{proof}[Proof of Theorem \ref{theorem1} (b)]

By the definition of $\bm\Delta$ and $|\widehat{\mathbf{M}} \widehat{\bm\Gamma}-\mathbf I |_{\infty} \leq \mu = O( \sqrt{\log p / n})$, where $|\cdot|_\infty$ is the entry-wise $l_\infty$ norm,  we have:
$$
\|\bm \Delta\|_{\infty} \leq  \sqrt{n} |\widehat{\mathbf{M}} \widehat{\bm\Gamma}-\mathbf I |_{\infty}\|\widehat{\bm{\theta}} - \bm{\theta}_{0}\|_{1} \leq  \sqrt{n} \mu\|\widehat{\bm{\theta}} - \bm{\theta}_{0}\|_{1}.
$$
For the proof of Theorem \ref{theorem1} (b), since $\sqrt{n}\mu\lambda|\mathcal{S}| \rightarrow 0$ as $n \rightarrow \infty$, 
it is sufficient to prove that $\|\widehat{\bm{\theta}} - \bm{\theta}_{0}\|_{1}=O_p(\lambda|\mathcal{S}|)$.

Similar to the proof of Lemma 2 in \cite{chai2019inference}, we have that the event $\mathcal{F}=\{\|\bm\Phi^{\top} \mathbf{W} \bm{\varepsilon} / n\|_{\infty}<{\lambda}/{2}\}$ satisfies $P(\mathcal{F}) \rightarrow 1$ as $n \rightarrow \infty$, and $\max _{i,j\in [p]}|\widehat{\Gamma}_{ ij}-\Gamma_{i j}| = O_p(\sqrt{\log p /n})$.
By the definition of $\widehat{\bm\theta}$, we have:
\begin{equation}
\label{eq:A.2}
 \frac{1}{2n} \| \mathbf{W}^{1/2} (\mathbf{y}-\bm{\Phi} \widehat{\bm\theta}) \|_2^2 -  \frac{1}{2n} \| \mathbf{W}^{1/2} (\mathbf{y}-\bm{\Phi} \bm{\theta}_0) \|_2^2
\leq \lambda\|\bm\theta_{0}\|_{1}-\lambda\|\widehat{\bm\theta}\|_{1},
\tag{A.2}
\end{equation}
and the left-hand side of equation (\ref{eq:A.2}) is: 
$$\text{LHS}=\frac{1}{2 n}\left[(\widehat{\bm\theta}-\bm\theta_{0})^{\top} \bm\Phi^{\top}  \mathbf{W} \bm\Phi (\widehat{\bm\theta}-\bm\theta_{0})-2 \bm\varepsilon^{\top} \mathbf{W} \bm\Phi(\widehat{\bm\theta}-\bm\theta_{0})\right].$$
By the fact that $(\widehat{\bm\theta}-\bm\theta_{0})^{\top} \bm\Phi^{\top} \mathbf{W}  \bm\Phi (\widehat{\bm\theta}-\bm\theta_{0})/n \geq 0$, under event $\mathcal{F}$, we have:
\begin{equation}
\label{eq:A.3}
\begin{aligned}
\frac{1}{n}(\widehat{\bm\theta}-\bm\theta_{0})^{\top} \bm\Phi^{\top}  \mathbf{W}\bm\Phi(\widehat{\bm\theta}-\bm\theta_{0}) & \leq \frac{2}{n} \bm\varepsilon^{\top} \mathbf{W} \bm\Phi(\widehat{\bm\theta}-\bm\theta_{0})+2 \lambda\|\bm\theta_{0}\|_{1}-2 \lambda\|\widehat{\bm{\theta}}\|_{1} \\
& \leq \lambda\|\widehat{\bm\theta}-\bm\theta_{0}\|_{1}+2 \lambda\|\bm\theta_{0,\mathcal{S}}\|_{1}-2 \lambda\|\widehat{\bm\theta}_{\mathcal{S}}\|_{1} -2 \lambda\| \widehat{\bm\theta}_{\mathcal{S}^{c}} -\bm\theta_{0,\mathcal{S}^{c}}\|_{1}\\
& \leq \lambda\|\widehat{\bm\theta}-\bm\theta_{0}\|_{1}+2 \lambda\| \widehat{\bm\theta}_{\mathcal{S}} - \bm\theta_{0,\mathcal{S}}\|_{1} -2 \lambda\|\widehat{\bm\theta}_{\mathcal{S}^{c}}-\bm\theta_{0,\mathcal{S}^{c}}\|_{1}\\
& \leq 3  \lambda\|\widehat{\bm\theta}_{\mathcal{S}}-\bm\theta_{0,\mathcal{S}}\|_{1},
\end{aligned}
\tag{A.3}
\end{equation}
where the second and third inequalities hold due to the fact that $|x|-|y| \leq |x-y|$ for any $x,y \in \mathbb{R}$.
By Lemma \ref{lemma1}, under $\|\widehat{\bm{\theta}}_{\mathcal{S}^c} - \bm{\theta}_{0,\mathcal{S}^c} \|_1 \leq 3 \|\widehat{\bm{\theta}}_{\mathcal{S}} - \bm{\theta}_{0,\mathcal{S}} \|_1$, we have:
\begin{equation}
\label{eq:A.4}
\frac{1}{n}(\widehat{\bm\theta}-\bm\theta_{0})^{\top} \bm\Phi^{\top} \mathbf{W}\bm\Phi(\widehat{\bm\theta}-\bm\theta_{0}) = (\widehat{\bm\theta}-\bm\theta_{0})^{\top} \widehat{\bm{\Gamma}} (\widehat{\bm\theta}-\bm\theta_{0}) \geq \frac{c_3}{2} \| \widehat{\bm{\theta}}_{\mathcal{S}} - \bm{\theta}_{0,\mathcal{S}} \|_{2}^2.
\tag{A.4}
\end{equation}
Combining (\ref{eq:A.3}) and (\ref{eq:A.4}), we have:
$$
\frac{c_3}{2} \| \widehat{\bm{\theta}}_{\mathcal{S}} - \bm{\theta}_{0,\mathcal{S}} \|_{2}^2
\leq 3 \lambda \| \widehat{\bm{\theta}}_{\mathcal{S}} - \bm{\theta}_{0,\mathcal{S}} \|_{1}
\leq 3 \lambda \sqrt{|\mathcal{S}|} \| \widehat{\bm{\theta}}_{\mathcal{S}} - \bm{\theta}_{0,\mathcal{S}} \|_{2},
$$
and obtain that 
$\| \widehat{\bm{\theta}}_{\mathcal{S}} - \bm{\theta}_{0,\mathcal{S}} \|_{2} \leq 6 \lambda \sqrt{|\mathcal{S}|}/c_3$ with probability tending to 1. Therefore, we have
$\| \widehat{\bm{\theta}} - \bm{\theta}_0 \|_1 \leq 4 \| \widehat{\bm{\theta}}_{\mathcal{S}} - \bm{\theta}_{0,\mathcal{S}} \|_{1} = O_p(\lambda |\mathcal{S}|)$. This completes the proof of Theorem \ref{theorem1} (b).

\end{proof}

\begin{proof}[Proof of Theorem \ref{theorem2}]
By the construction of $U_j$, we have:
\begin{equation}
\label{eq:A.5}
U_{j}=\frac{\sqrt{n} \theta_{0 j}}{{\sqrt{\widehat{\Lambda}_{j j}}}}+\frac{v_{j}}{\sqrt{\widehat{\Lambda}_{j j}}}-\frac{\Delta_{j}}{{\sqrt{\widehat{\Lambda}_{j j}}}}.
\tag{A.5}
\end{equation}
From Theorem \ref{theorem1} (b), we have $\|\bm\Delta\|_\infty=o_p(1)$. Under the null hypothesis $\mathcal{H}_j^0: \theta_{0j} = 0$, where the first term of (\ref{eq:A.5}) is zero, if we can show that $v_j/ \widehat{\Lambda}_{jj}^{1/2}\stackrel{d}{\rightarrow} \mathcal{N}(0,1)$, then we have 
$U_j \stackrel{d}{\rightarrow} \mathcal{N}(0,1)$ by Slutsky's lemma. 
By Condition \ref{cond:3} and the definition of $\bm{\Lambda}$, there exists a finite positive constant $c_4$ such that $1/\Lambda_{jj} \leq 1/c_4$. Following the proof of Theorem \ref{theorem1} (a), $\max_{j\in[p]} \| \widehat{\bm{m}}_j -\bm{m}_j\|_2= O_p(\sqrt{|\mathcal{S}| \log p/n})$, and Lemma A.20 in \cite{dai2023scale}, we have $\max_{j\in[p]} |\widehat{\Lambda}_{jj} - {\Lambda}_{jj}| = O_p(\sqrt{|\mathcal{S}| \log p/n})$.
In addition, since
\begin{equation*}
\max_{j\in[p]} |\widehat{\Lambda}_{jj}^{1/2} - {\Lambda}_{jj}^{1/2}| 
\leq \max_{j\in[p]} |\widehat{\Lambda}_{jj} - {\Lambda}_{jj}|/{\Lambda}_{jj}^{1/2}
\leq \max_{j\in[p]} |\widehat{\Lambda}_{jj} - {\Lambda}_{jj}| /\sqrt{c_4},
\end{equation*} 
and following the same arguments as in the proof of Theorem 3.1 in \cite{van2014asymptotically}, we have that $\max_{j\in[p]} |1/\widehat{\Lambda}_{jj}^{1/2} - 1/{\Lambda}_{jj}^{1/2}|=O_p(\sqrt{|\mathcal{S}| \log p/n})$. Therefore, we obtain $\breve{v}_j = v_j/\widehat{\Lambda}_{jj}^{1/2} \stackrel{d}{\rightarrow} \mathcal{N}(0,1)$ by Slutsky's lemma.

Next we show that the overall FDR is controlled at the pre-specified level $\alpha$ using proposed procedure.
We first examine the case that $t_0$ given by (\ref{eq:algorithm1}) does not exist and set $t_0=\sqrt{2\log p}$. Note that
\begin{equation*}
\label{eq:A.6}
\begin{aligned}
P\left(\sum_{j \in \mathcal{S}^{c} \cap [d]} 1\left(\left|U_{j}\right| \leq {\sqrt{2\log p}}\right) \geq 1\right) 
&\leq  
P\left(\sum_{j \in \mathcal{S}^{c} \cap [d]} 1 \left(U_{j} \geq {\sqrt{2\log p}}\right) \geq 1\right)\\
&+P\left(\sum_{j \in \mathcal{S}^{c} \cap [d]} 1 \left(U_{j} \leq -{\sqrt{2\log p}}\right) \geq 1\right),
\end{aligned}
\tag{A.6}
\end{equation*}
and $1/\widehat{\Lambda}_{jj}^{1/2}\leq 1/c_5$ for a positive constant $c_5$ by the fact that $1/\Lambda_{jj}^{1/2} \leq 1/\sqrt{c_4}$ and $\max_{j \in [p]}|1/\widehat{\Lambda}_{jj}^{1/2}-1/{\Lambda}_{jj}^{1/2}|=o_p(1)$.
For any $\epsilon > 0$, the first term on the right hand side of (\ref{eq:A.6}) is bounded by 
\begin{equation*}
\begin{aligned}
&~P\left(\sum_{j \in \mathcal{S}^{c} \cap [d]} 1\left(U_{j} \geq \sqrt{2 \log p}\right) \geq 1\right) \\
= &~ P\left(\sum_{j \in \mathcal{S}^{c} \cap [d]} 1\left(\breve{v}_{j} \geq \sqrt{2 \log p}+ \Delta_j/\widehat{\Lambda}_{jj}^{1/2}\right) \geq 1\right) \\
\leq &~ P\left(\sum_{j \in \mathcal{S}^{c} \cap [d]} 1\left(\breve{v}_{j} \geq \sqrt{2 \log p}- \|\bm{\Delta}\|_\infty/c_5\right) \geq 1\right) \\
\leq &~ P\left(\sum_{j \in \mathcal{S}^{c} \cap [d]} 1\left(\breve{v}_{j} \geq \sqrt{2 \log p}-\epsilon \right) \geq 1\right) + P(\|\bm{\Delta}\|_\infty \geq c_5 \epsilon  )\\
 \leq &~ d\max _{j \in [d]} P\left(\breve{v}_{j} \geq \sqrt{2 \log p} -\epsilon\right)+P\left(\|\bm{\Delta}\|_{\infty} \geq c_5 \epsilon \right)\\
\leq &~ \frac{d}{2} G\left(\sqrt{2 \log p}-\epsilon\right)+P\left(\|\bm\Delta\|_{\infty} \geq c_5 \epsilon \right),
\end{aligned}
\end{equation*}
which goes to zero as $(n,d)\rightarrow \infty$ by Theorem \ref{theorem1} (b), the fact that $G(t) < 2t^{-1} \phi(t)$ for $\phi(t) = e^{-t^2/2}/\sqrt{2\pi}$ (Lemma 7.1 in \cite{javanmard2019false}),  and the arbitrary of $\epsilon$. By symmetry, the second term on the right hand side of (\ref{eq:A.6}) also goes to zero as $(n,d)\rightarrow \infty$. Similarly, for any $\epsilon >0$, there exist a positive constant $c_6$ such that
\begin{equation*}
\begin{aligned}
&~P\left(\sum_{j \in [d]} \sum_{k \in \mathcal{S}^{c} \cap \mathcal{B}_{2j}} 1\left(U_{j} \geq \sqrt{2 \log p}, U_k \geq \sqrt{2\log p} \right) \geq 1\right) \\
 \leq &~ dq \max _{j \in [d], k\in \mathcal{S}^c \cap \mathcal{B}_{2j}} P\left(\breve{v}_{j} \geq \sqrt{2 \log p} -\epsilon, \breve{v}_{k} \geq \sqrt{2 \log p} -\epsilon\right)+P\left(\|\bm{\Delta}\|_{\infty} \geq c_5 \epsilon \right)\\
 \leq&~ dq c_6 \left(\sqrt{2\log p}-\epsilon + 1\right)^{-2} \exp \left\{-\frac{(\sqrt{2 \log p} - \epsilon)^2}{1+\Lambda_{ij}^0}\right\} + P\left(\|\bm{\Delta}\|_{\infty} \geq c_5 \epsilon \right),
\end{aligned}
\end{equation*}
where the last inequality follows from Lemma 6.2 of \cite{liu2013gaussian} and goes to zero as $(n,d) \rightarrow \infty$ by Theorem \ref{theorem1} (b) and the arbitrariness of $\epsilon$. Therefore, the claim is proved under $t_0=\sqrt{2\log p}$.

Then we consider the case that $t_0$ defined by (\ref{eq:algorithm1}) exists. By the construction of $t_0$, we have $$\frac{dG(t_0)(1+qG(t_0))}{R(t_0) \vee 1} \leq \alpha.$$ Let 
\begin{equation*}
\begin{aligned}
\nu_1& = \sup_{0 \leq t \leq t_{p}}\left|\frac{\sum_{j \in \mathcal{S}^{c} \cap [d] }\left\{ 1\left(\left|U_{j}\right| \geq t\right)-G(t)\right\}}{d G(t)}\right|, ~\text{and}\\
 \nu_2& = \sup_{0 \leq t \leq t_{p}}\left|\frac{\sum_{ j \in[d]}   \sum_{k \in \mathcal{S}^{c} \cap \mathcal{B}_{2j}}\left\{1\left(\left|U_{j}\right| \geq t, \left|U_{k}\right| \geq t \right)-G^2(t)\right\}}{dq G^2(t)}\right|.
\end{aligned}    
\end{equation*}
Then, we can bound FDP$_0$ and $\sum_{j\in[d]}\text{FDP}_j$ as follows:
\begin{equation*}
\label{eq:A.7}
\begin{aligned}
\text{FDP}_0(t_0)&=\frac{\sum_{j \in \mathcal{S}^c\cap [d]} 1\left(\left|U_{j}\right| \geq t_0\right)}{R(t_{0}) \vee 1} \\
&\leq 
\frac{d G\left(t_{0}\right) \nu_1+ \left|\mathcal{S}^c \cap [d]\right| G\left(t_{0}\right)}{R(t_{0}) \vee 1} \\
&\leq 
\frac{d G(t_0)\left(1+\nu_1\right)}{R(t_0) \vee 1},
\end{aligned}
\tag{A.7}
\end{equation*}
and 
\begin{equation*}
\label{eq:A.8}
\begin{aligned}
\sum_{j\in[d]}\text{FDP}_j(t_0)&=\frac{\sum_{j\in[d]}   \sum_{k \in \mathcal{S}^{c} \cap \mathcal{B}_{2j}} 1\left(|U_{j}| \geq t_0, |U_{k}| \geq t_0 \right)}{R(t_{0}) \vee 1} \\
&\leq 
\frac{dq G^2(t_0)\left(1+\nu_2\right)}{R(t_0) \vee 1}.
\end{aligned}
\tag{A.8}
\end{equation*}
Combining (\ref{eq:A.7}) and (\ref{eq:A.8}), we obtain that the overall FDP for the whole procedure is: 
\begin{equation*}
\label{eq:A.9}
\begin{aligned}
\text{FDP}(t_0) \equiv \sum_{j=0}^d \text{FDP}_j(t_0) &\leq \frac{dG(t_0)\left(1+qG(t_0)+\nu_1+qG(t_0)\nu_2\right)}{R(t_0)\vee 1}\\ 
& \leq \alpha + \frac{dG(t_0)(\nu_1+qG(t_0)\nu_2)}{R(t_0)\vee 1}.
\end{aligned}
\tag{A.9}
\end{equation*}
Similar to the proof of Theorem 3.1 and Lemma 6.4 in \cite{liu2013gaussian}, we can obtain that 
\begin{equation}
\label{eq:A.10}
\lim_{(n,d)\rightarrow \infty} \sup_{0\leq t \leq t_p}\left|\frac{\sum_{j\in \mathcal{S}^c \cap [d]}\{P(|U_j| \geq t) - G(t)\}}{d G(t)}\right|= 0,
\tag{A.10}
\end{equation}
and
\begin{equation*}
\label{eq:A.11}
\lim_{(n,d) \rightarrow \infty} \sup_{0\leq t \leq t_p} \left|
\frac{\sum_{j\in [d]}\sum_{k \in \mathcal{S}^{c} \cap \mathcal{B}_{2j}} \{ P \left(|U_j| \geq t, |U_k|  \geq t \right)-G^2(t)\}}{dqG^2(t)} \right|  = 0.
\tag{A.11}
\end{equation*}
By a modification of the Glivenko-Cantelli theorem \cite{berti2002uniform}, it follows that
$$\lim_{(n,d) \rightarrow \infty} \sup_{0\leq t\leq t_p} \left| \frac{1}{d}
\sum_{j\in \mathcal{S}^c \cap [d]} \left\{1(|U_j|\geq t) - P(|U_j| \geq t) \right\}\right| = 0~~ \text{almost surely},$$
and
$$\lim_{(n,d) \rightarrow \infty} \sup_{0\leq t\leq t_p} \left| \frac{1}{dq}
\sum_{j\in[d]} \sum_{k \in \mathcal{S}^c \cap \mathcal{B}_{2j}} \left\{ 
1(|U_j| \geq t, |U_k| \geq t)-P(|U_j| \geq t, |U_k| \geq t)
\right\} 
\right| = 0~~ \text{almost surely}.$$
Therefore, combining with (\ref{eq:A.10}) and (\ref{eq:A.11}), we have
\begin{equation}
\label{eq:A.12}
\begin{aligned}
\limsup \limits_{(n,d) \rightarrow \infty} \nu_1 
\leq & \lim_{(n,d) \rightarrow \infty} \sup_{0< t\leq t_p}
\left| \frac{\sum_{j\in \mathcal{S}^c \cap [d]} \left\{ 1(|U|_j \geq t) - P(|U|_j \geq t) \right\}}{dG(t)}\right|\\
&+
\lim_{(n,d) \rightarrow \infty} \sup_{0\leq t\leq t_p} \left| \frac{\sum_{j\in \mathcal{S}^c \cap [d]} \left\{ P(|U|_j \geq t) - G(t) \right\}}{dG(t)} \right| = 0 ~~ \text{almost surely},
\end{aligned}
\tag{A.12}
\end{equation}
and 
\begin{equation}
\label{eq:A.13}
\begin{aligned}
\limsup \limits_{(n,d) \rightarrow \infty} \nu_2 
\leq & \lim_{(n,d) \rightarrow \infty} \sup_{0 < t\leq t_p}
\left| \frac{\sum_{j\in[d]}\sum_{ k \in {\mathcal{S}^c \cap \mathcal{B}_{2j}}} \left\{ 1(|U|_j \geq t, |U|_k \geq t) - P(|U|_j \geq t, |U|_k \geq t) \right\}}{dqG^2(t)}\right|\\
&+
\lim_{(n,d) \rightarrow \infty} \sup_{0\leq t\leq t_p}
\left| \frac{\sum_{j\in[d]}\sum_{ k \in {\mathcal{S}^c \cap \mathcal{B}_{2j}}} \left\{ P(|U|_j \geq t, |U|_k \geq t) - G^2(t) \right\}}{dqG^2(t)}\right| = 0 ~~ \text{almost surely},
\end{aligned}
\tag{A.13}
\end{equation}
where the inequalities hold by the triangle inequality. Similarly, we have
$$\lim_{(n,d) \rightarrow \infty} \sup_{0\leq t\leq t_p} \frac{1}{d}\left|R(t)-dG(t)-dq G^2(t)\right| = 0 ~~ \text{almost surely}.$$
By the construction of $t_0$, (\ref{eq:A.9}), (\ref{eq:A.12}), (\ref{eq:A.13}), and continuous mapping theorem, it follows that 
\begin{equation*}
    \begin{aligned}
\limsup \limits_{(n,d) \rightarrow \infty} \text{FDP}(t_0) 
&\leq \limsup \limits_{(n,d) \rightarrow \infty}\left\{ \alpha+
\frac{dG(t_0)(\nu_1 + qG(t_0)\nu_2)}{R(t_0)\vee 1}
\right\}\\
&= \limsup \limits_{(n,d) \rightarrow \infty} \left\{  \alpha + \frac{dG(t_0)\nu_1 + dqG^2(t_0) \nu_2}{dG(t_0) + dqG^2(t_0)}
\right\}
= \alpha,
    \end{aligned}
\end{equation*}
in probability 1.
Furthermore, by Fatou's lemma,
$$
\limsup \limits_{(n,d) \rightarrow \infty} \text{FDR}= \limsup \limits_{(n,d) \rightarrow \infty} \left(E\left[\text{FDP}(t_0)\right] \right) \leq E\left[\limsup \limits_{(n,d) \rightarrow \infty} \text{FDP}(t_0) \right] \leq \alpha.
$$
This complete the proof of Theorem \ref{theorem2}.

\end{proof}

\subsection*{Appendix B: Additional numerical results}
\label{appendixB}

\begin{figure}[h]
\centering
\includegraphics[height=3.8 in, width=5.0 in]{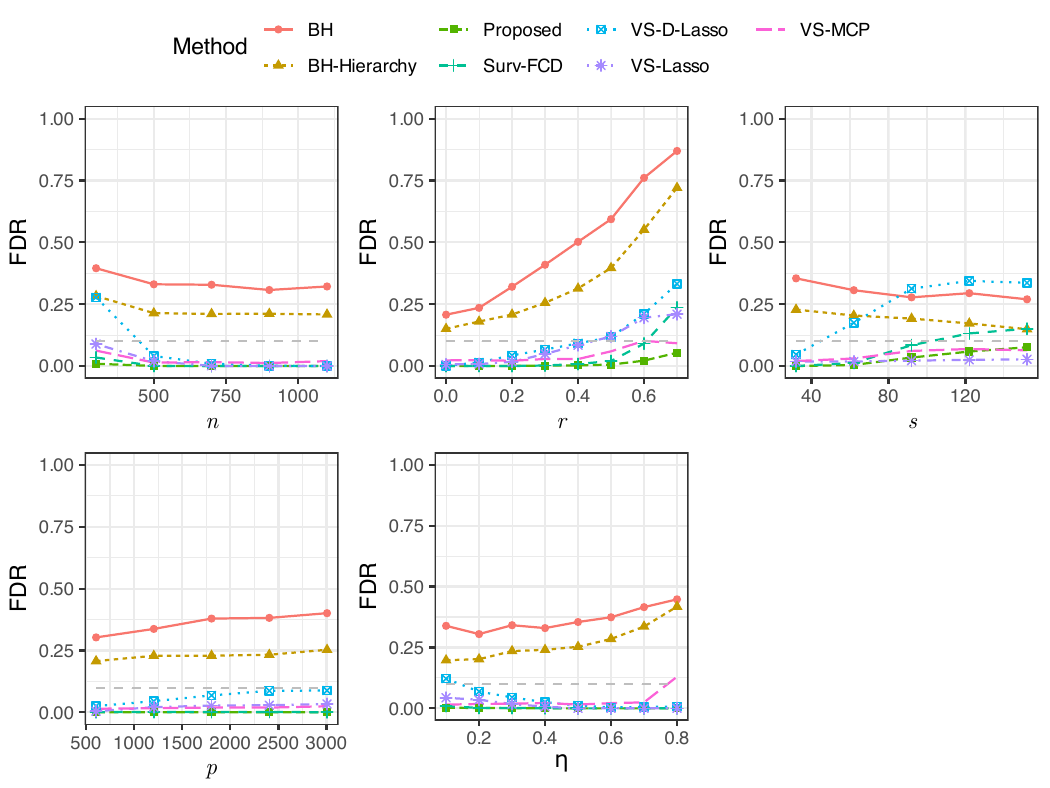}
\caption{
Empirical FDR based on 200 replicates with $\alpha=0.1$ and $a=2$. 
Top panels (from left to right) correspond to various sample sizes, censoring rates, and numbers of nonzero covariates, respectively. Bottom panels (from left to right) correspond to different dimensions and correlation coefficients, respectively. Dashed grey lines correspond to target FDR.
}
\label{fig3}
\end{figure}

\begin{figure}[h]
\centering
\includegraphics[height=3.8 in, width=5.0 in]{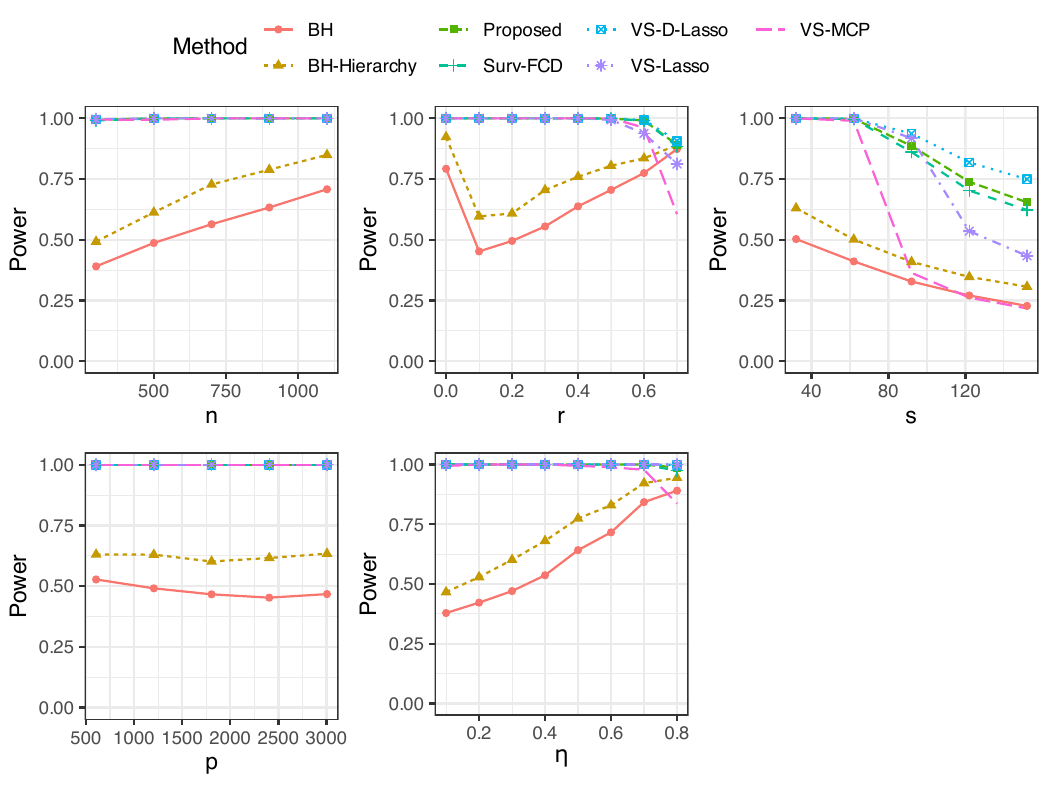}
\caption{Empirical power based on 200 replicates with $\alpha=0.1$ and $a=2$. 
Top panels (from left to right) correspond to various sample sizes, censoring rates, and numbers of nonzero covariates, respectively. Bottom panels (from left to right) correspond to different dimensions and correlation coefficients, respectively.}
\label{fig4}
\end{figure}

\clearpage
\begin{figure}[h]
\centering
\includegraphics[height=3.8 in, width=3.5 in]{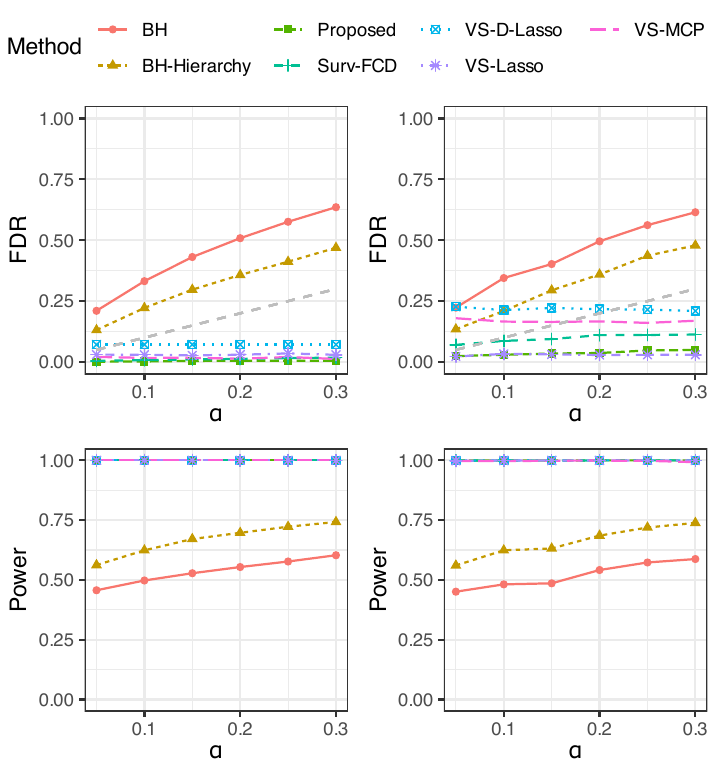}
\caption{Empirical FDR and power based on 200 replicates with different $\alpha$ values. 
Left column corresponds to the exponential survival model, and right column corresponds to the log-logistic survival model. Dashed grey lines correspond to target FDR.
}
\label{fig5}
\end{figure}


{
\noindent{\bf Additional information on the analysis using the proposed approach}
A quick literature search suggests that the genes identified by the proposed method have important implications for breast cancer survival. For example, it was found that the early pathological stages of breast cancer were significantly correlated with the high expression level of COL6A6
\cite{yeh2018extracellular}. As suggested by Table \ref{realdata:1}, with all the other variables equal, an individual with COL6A6 expression one unit greater than another is expected to survive approximately 1.120 $(  =\exp(0.113))$ times longer \citep{klein2014handbook}. According to \cite{wu2021exosomal}, the cooperation between exosomal miR-19a and IBSP leads to the promotion of osteolytic bone metastasis in estrogen receptor-positive breast cancer. 
Interestingly, a significant interaction between IBSP and age at diagnosis was observed to be associated with the progression of breast cancer.
\cite{dustin2019esr1} suggested that the development of ligand-independent ESR1 mutations is a prevalent mechanism for the development of hormonal therapy resistance in metastatic estrogen receptor (ER)-positive breast cancer during aromatase inhibitor therapy.
}

\begin{table}[]
\setlength{\abovecaptionskip}{0pt}
\setlength{\belowcaptionskip}{5pt}
\caption{Analysis of the breast cancer data using Surv-FCD: identified main effects and interactions.}
\label{realdata:3}
\centering
\resizebox{210pt}{26mm}
{
\begin{tabular}{ccccc}
\hline
Gene            & Main effects & \multicolumn{3}{c}{ Interactions} \\
\cline{3-5}
                &           & TMB           & Age           & ER Status        \\
                \hline
COL6A6          & 0.113              &               &               &                  \\
CCND2           & 0.085              &               &               &                  \\
ESR1            & -2.322             &               &               & 2.427            \\
NGFR            &                    &               & 0.045         &                  \\
IBSP            &                    &               & 0.044         &                  \\
GNG7            &                    & 0.044         &               &                  \\
IL12A           &                    &               & 0.054         &                  \\
GSTA2           &                    &               &               & -0.022           \\
SLC2A1          &                    &               &               & 0.051            \\
\hline
                &           & \multicolumn{3}{c}{Low-dimensional covariates}                  \\
                \cline{3-5}
                &           & TMB           & Age           & ER Status        \\
\hline
Main effects &           & 0.045         & -0.064        & 4.701       \\
\hline    
\end{tabular}
}
\end{table}

\begin{table}[]
\setlength{\abovecaptionskip}{0pt}
\setlength{\belowcaptionskip}{5pt}
\caption{Analysis of the breast cancer data using VS-D-Lasso: identified main effects and interactions.}
\label{realdata:4}
\centering
\resizebox{210pt}{105mm}
{
\begin{tabular}{ccccc}
\hline
Gene            & Main effects & \multicolumn{3}{c}{ Interactions} \\
\cline{3-5}
                &           & TMB           & Age           & ER Status        \\
                \hline
FGFR4            & 0.027             &               &               &                  \\
INSR             & -0.013            &               &               &                  \\
IGF1R            & -0.023            &               &               &                  \\
IL2RA            & 0.014             &               &               &                  \\
COL6A6           & 0.113             &               &               &                  \\
THBS1            & -0.023            &               &               &                  \\
IBSP             & -0.075            &               & 0.048         &                  \\
LPAR1            & 0.023             &               &               &                  \\
GNG10            & -0.012            &               &               &                  \\
PIK3R5           & -0.011            &               &               &                  \\
AKT3             & -0.019            &               &               &                  \\
MYC              & -0.033            &               &               &                  \\
CDK2             & -0.053            &               &               &                  \\
CCND2            & 0.089             &               &               &                  \\
BCL2             & -0.011            &               &               &                  \\
NR4A1            & -0.043            &               &               &                  \\
AXIN2            & -0.054            &               &               &                  \\
WNT6             & -0.030            & -0.039        &               &                  \\
AGTR1            & -0.012            &               &               &                  \\
ROCK1            & -0.011            &               &               &                  \\
TPM3             & -0.011            &               &               &                  \\
CAMK2B           & 0.022             &               &               &                  \\
CDC42            & -0.046            &               &               &                  \\
PLD2             & 0.024             &               &               &                  \\
ESR1             & -2.322            &               &               & 2.442            \\
RB1              & -0.069            &               &               &                  \\
GADD45G          & 0.034             &               &               &                  \\
CASP8            & -0.014            &               &               &                  \\
GSTM2            & 0.042             &               &               &                  \\
EP300            & -0.076            &               &               &                  \\
DLL3             & -0.038            &               &               &                  \\
RBBP8            &                   &               &               & 0.042            \\
UIMC1            &                   &               &               & -0.013           \\
TGFA             &                   &               &               & -0.026           \\
FGF10            &                   &               &               & 0.023            \\
VEGFA            &                   &               &               & -0.030           \\
VEGFC            &                   &               &               & -0.016           \\
EGFR             &                   &               & 0.014         &                  \\
NGFR             &                   &               & 0.047         &                  \\
NRAS             &                   &               & 0.018         &                  \\
IL6              &                   &               &               & -0.017           \\
GNG7             &                   & 0.044         &               &                  \\
SGK3             &                   &               &               & 0.024            \\
PPP2R5E          &                   &               & -0.021        &                  \\
PPP2R5A          &                   &               &               & 0.013            \\
CCNE1            &                   &               & 0.017         &                  \\
CREB5            &                   & -0.020        &               &                  \\
WNT3A            &                   &               & -0.014        &                  \\
WNT7B            &                   &               &               & -0.015           \\
IL12A            &                   & -0.011        & 0.056         &                  \\
CALML3           &                   &               &               & -0.017           \\
E2F2             &                   &               & 0.031         &                  \\
E2F3             &                   &               &               & -0.049           \\
SMAD2            &                   &               & 0.012         &                  \\
CTBP1            &                   &               & 0.025         &                  \\
GSTA2            &                   &               &               & -0.024           \\
SLC2A1           &                   &               &               & 0.057            \\
\hline
                &           & \multicolumn{3}{c}{Low-dimensional covariates}                  \\
                \cline{3-5}
                &           & TMB           & Age           & ER Status        \\
\hline
Main effects &           & 0.045         & -0.064        & 4.701       \\
\hline    
\end{tabular}
}
\end{table}

\begin{table}[]
\setlength{\abovecaptionskip}{0pt}
\setlength{\belowcaptionskip}{5pt}
\caption{Analysis of the breast cancer data using VS-Lasso: identified main effects and interactions.}
\label{realdata:5}
\centering
\resizebox{210pt}{76mm}
{
\begin{tabular}{ccccc}
\hline
Gene            & Main effects & \multicolumn{3}{c}{ Interactions} \\
\cline{3-5}
                &           & TMB           & Age           & ER Status        \\
                \hline
FGFR4            & 0.019             &               &               &                  \\
IGF1R            & -0.015            &               &               &                  \\
COL6A6           & 0.111             &               &               &                  \\
THBS1            & -0.024            &               &               &                  \\
IBSP             & -0.069            &               & 0.033         &                  \\
GNG10            & -0.033            &               &               &                  \\
PIK3R5           & -0.011            &               &               &                  \\
MYC              & -0.038            &               &               &                  \\
CDK2             & -0.049            &               &               &                  \\
CCND2            & 0.077             &               &               &                  \\
YWHAB            & -0.013            &               &               &                  \\
NR4A1            & -0.046            &               &               &                  \\
AXIN2            & -0.040            &               &               &                  \\
WNT6             & -0.031            & -0.036        &               &                  \\
ROCK1            & -0.011            &               &               &                  \\
CAMK2B           & 0.014             &               &               &                  \\
CDC42            & -0.039            &               &               &                  \\
PLD2             & 0.016             &               &               &                  \\
ESR1             & -2.309            &               &               & 2.384            \\
RB1              & -0.074            &               &               &                  \\
GADD45G          & 0.022             &               &               &                  \\
GSTM2            & 0.042             &               &               &                  \\
EP300            & -0.064            &               &               &                  \\
DLL3             & -0.028            &               &               &                  \\
RBBP8            &                   &               &               & 0.026            \\
TGFA             &                   &               &               & -0.028           \\
FGF10            &                   &               &               & 0.011            \\
VEGFA            &                   &               &               & -0.021           \\
VEGFC            &                   &               &               & -0.013           \\
NGFR             &                   &               & 0.040         &                  \\
NRAS             &                   &               & 0.011         &                  \\
GNG7             &                   & 0.043         &               &                  \\
SGK3             &                   &               &               & 0.011            \\
CCNE1            &                   &               & 0.018         &                  \\
CREB5            &                   & -0.017        &               &                  \\
WNT7B            &                   &               &               & -0.015           \\
IL12A            &                   &               & 0.051         &                  \\
E2F2             &                   &               & 0.017         &                  \\
E2F3             &                   &               &               & -0.024           \\
CTBP1            &                   &               & 0.018         &                  \\
GSTA2            &                   &               &               & -0.019           \\
SLC2A1           &                   &               &               & 0.038            \\
\hline
                &           & \multicolumn{3}{c}{Low-dimensional covariates}                  \\
                \cline{3-5}
                &           & TMB           & Age           & ER Status        \\
\hline
Main effects &           & 0.055          & -0.063         & 4.750        \\
\hline    
\end{tabular}
}
\end{table}

\begin{table}[]
\setlength{\abovecaptionskip}{0pt}
\setlength{\belowcaptionskip}{5pt}
\caption{Analysis of the breast cancer data using VS-MCP: identified main effects and interactions.}
\label{realdata:6}
\centering
\resizebox{210pt}{18mm}
{
\begin{tabular}{ccccc}
\hline
Gene            & Main effects & \multicolumn{3}{c}{ Interactions} \\
\cline{3-5}
                &           & TMB           & Age           & ER Status        \\
                \hline
COL6A6          & 0.080              &               &               &                  \\
IBSP            & -0.019             &               &               &                  \\
ESR1            & -2.591             &               &               & 2.722            \\
TXNRD1          & -0.276             &               &               &                  \\
CCNE1           &                    &               & 0.057         &                  \\
\hline
                &           & \multicolumn{3}{c}{Low-dimensional covariates}                  \\
                \cline{3-5}
                &           & TMB           & Age           & ER Status        \\
\hline
Main effects &          & 0.094         & -0.054        & 4.896       \\
\hline    
\end{tabular}
}
\end{table}

\end{document}